\renewcommand{\vec}[1]{\mathbf{#1}}
\newcommand{\explain}[1]{\tag{\textcolor{gray}{#1}}}
\DeclareMathOperator{\Ex}{\mathbb{E}}
\DeclarePairedDelimiterXPP\Aver[1]{\mathbb{E}}{[}{]}{}{
	
	#1
}
\newtheorem{lemma}{Lemma}
\newtheorem{definition}{Definition}
\newtheorem{proposition}{Proposition}
\author{Simina Br\^anzei \and  Nithish Kumar \and Gireeja Ranade}
\title{Phase Transitions of Diversity in  Stochastic Block Model Dynamics\footnote{The authors are in  alphabetical  order. This work was done in part while S. Br\^anzei was visiting the Simons Institute for the Theory of Computing. S. Br\^anzei was  supported in part by US National Science Foundation CAREER grant CCF-2238372. N. Kumar was supported in part by US National Science Foundation grants   CCF-1910411 and  CCF-2228814. G. Ranade was  supported in part by  US National Science Foundation  CAREER grant ECCS-2240031, and also by  a Google Inclusion Award. }}
\author{Simina Br\^anzei\footnote{Purdue University. E-mail: simina.branzei@gmail.com.}
	\and 
	Nithish Kumar\footnote{Purdue University. E-mail: kumar410@purdue.edu.}
	\and 
	Gireeja Ranade\footnote{University of California Berkeley. E-mail: ranade@eecs.berkeley.edu.}
}
\date{}
\begin{document}
	
	\maketitle

	\begin{abstract}
		This paper proposes a stochastic block model with dynamics where the population grows using preferential attachment.  Nodes with higher weighted degree are more likely to recruit new nodes, and nodes always recruit nodes from their own community. This model can capture how communities  grow or shrink based on their collaborations with other nodes in the network, where an edge represents a collaboration on a project.
		
		Focusing on the case of two communities, we  derive a deterministic approximation to the dynamics and characterize the phase transitions for diversity, i.e. the parameter regimes in which either one of the communities  dies out  or the two communities reach parity over time.
		
		In particular, we find that the minority may vanish when the probability of cross-community edges is low, even when cross-community projects are more valuable than projects with collaborators from the same community.  
	\end{abstract}

	\section{Introduction}

	Academic institutions and large technology companies  have been increasing their focus on hiring diverse faculty members and employees, particularly at senior levels
	~\cite{bradley2018impact,salem2022don}. Multiple papers in the broader computer science community have addressed issues around fair hiring or selection processes (e.g.~\cite{salem2022don, kleinberg2016inherent, mitchell2021algorithmic, suhr2021does}). This paper considers what happens after hiring. We focus on the recruitment of candidates and growth of minority communities within organizations over time, in the context of the collaboration networks in the organization.

	An important part of recruitment and retention is the climate of an organization, which is heavily influenced by the interpersonal network of the members. Such networks and social capital~\cite{bourdieu2018forms} have a tremendous impact of the success of an individual~\cite{putnam2000bowling}. ~\cite{li2019early} shows how collaborations with top scientists can have significant impacts on the  careers of junior scientists. Additionally, network ties are known to have impact on employment~\cite{calvo2005social}, career-advancement~\cite{tharenou1997explanations, cullen2023old}, and  educational outcomes~\cite{calvo2009peer, kohli2023inclusive}. 
	
	The structures of both personal and professional networks are strongly influenced by the homophily principle, which states that a person is more likely to form connections with individuals that have shared characteristics~\cite{mcpherson2001birds, smith2014social}. Homophily can contribute to occupation-based segregation (i.e., individuals of the same gender/ethnicity seem to cluster into similar jobs) as well as establishment-based segregation (i.e. individuals of similar ethnicity/gender work at the same institutions), which in turn can contribute to wage gaps~\cite{petersen1995separate}. Thus, people are more likely to recruit members of their own communities to their organizations. Meanwhile, individuals are more likely to join organizations where those similar to themselves have been successful; for example,  newly recruited women often ask other women about their experiences in an organization. The work in \cite{guimera2005team} has shown that recruitment processes can affect network structure. In this paper, we consider how the network structure can affect recruitment.

	We consider a stochastic block model~\cite{HLL83} with two communities, red and blue. Edges in the graph represent collaborations on projects between agents, and the weight associated with the edge represents the value or impact of the project. Multiple new nodes join the network at every time step. To capture homophily, nodes recruit other nodes of the same color (red/blue). The population grows using preferential attachment~\cite{barabasi1999emergence, barabasi2002evolution}, where the attachment is based on the weighted degrees of the nodes.	
	 Thus, the more successful a community is---as measured by the  weight of its projects---the more new nodes from that community join the graph.
	\subsection{Related work}
	
	\paragraph{Network formation.} We are motivated to model scientific collaboration networks, the growth of which has been studied in~\cite{barabasi2002evolution, huang2008collaboration}. Others have examined how identity (e.g. gender) plays a role in this collaboration network~\cite{agarwal2016women,gertsberg2022unintended}. 
	
	Different strategic aspects of collaboration networks have been considered; in particular,~\cite{jackson2003strategic} considers a model where agents prefer to collaborate with nodes that have few edges (i.e. other collaborations), since this may indicate they have more available time to work on a joint project. 
	
	In contrast,~\cite{calvo2009peer} considers a social network model where edges in the network have weights, and the payoff to an agent is a function not only of their own effort but also of the effort of their peers (i.e. those they are connected to in the network). Our model also considers weighted edges, and the payoff to an agent is the sum of the weights across all its edges at a particular time, but we do not consider strategic aspects. \cite{kohli2023inclusive} addresses the problem of circumventing in-group bias that may appear in collaborations among university students on course projects.

	The classical Barabasi-Albert algorithm for  
	generating random scale-free networks using preferential attachment  was introduced in \cite{barabasi1999emergence}. 
	\cite{SHC20} considered a model of biased networks related to ours, where each node belongs to a community (e.g. red or blue). Nodes form connections according to the following principles:
	\begin{description}
		\item[$\; \;$] Community affiliation: When  a node enters the network, it
		chooses label red  with a probability $r$, and
		blue  with probability $1 - r$; 
		\item[$\; \;$] Rich-get-richer: Each arriving node  chooses to connect to some
		other node according to preferential attachment (i.e. with probability proportional to that node’s degree);  
		\item[$\; \;$] Homophily: If the two nodes from the previous step have
		the same label, an edge is formed; otherwise, the new node
		accepts the connection with probability $p$, where $0 < p < 1$,
		and the process is repeated until an edge is formed.
	\end{description}
	
	In this setting, \cite{SHC20} considered the problem of seeding: how to introduce early adopters (e.g. nodes from under-represented communities) in the network so that fairer outcomes are achieved in the long term. Moreover, \cite{SHC20} prove the existence
	of an analytical condition in which diversity acts as a catalyst
	for efficiency and analyze experimentally scientific networks using DBLP data. 
	
	The seeding or influence maximization problem was  posed in \cite{DR01}. An algorithmic formulation and analysis was given in \cite{KKT03}, which  identified a class of naturally occurring  processes where the influence maximization problem reduces to
	maximizing a submodular function under a cardinality constraint. Due to the desirable properties of submodular functions, this connection leads to efficient approximation algorithms for influence maximization. Learning to influence from observations of cascades was studied in \cite{BIS17}.
	For an extensive survey on  network formation and its connections to games and markets, see  the book \cite{KE12}.

	\paragraph{Stochastic block models.} The stochastic block model is a generative model for random graphs introduced in \cite{HLL83}, where there is a set $[n] = \{1, \ldots, n\}$ of nodes, each belonging to a community $C_1, \ldots, C_k$, and a $ k \times k$ symmetric probability matrix $\vec{p}$. A random graph is generated as follows: for each pair of vertices $u \in C_i$ and $v \in C_j$, generate the edge $(u,v)$ with probability $p_{i,j}$.  A classic question is community recovery: given one or more graphs generated in this fashion, recover the underlying communities of the nodes.
	Weighted stochastic block models were formulated in~\cite{aicher2013adapting,AJC14,Peixoto18}.
	A survey on  stochastic block models can be found in \cite{abbe2017}.

	Our model is informed by works such as~\cite{xu2014dynamic,ludkin2018dynamic} that consider time-evolving stochastic block models, with a fixed set of nodes. Our model explicitly considers the addition of new nodes at each time step, and these new nodes join using preferential attachment, similar to~\cite{collevecchio2013preferential}.

	\paragraph{Biases in algorithmic decision making.} 
	Studies have investigated how automated decision-making can have positive feedback effects of reinforcing biases over time~\cite{lowry1988blot, barocas2016big, liu2020disparate, ensign2018runaway}. Our model shows that  without high-value cross-community collaborations, there can be a  feedback effect where the majority community  grows while the minority community decays.  \cite{liu2018delayed} observed that common static fairness criteria can worsen long-term well-being.  To show this, \cite{liu2018delayed} considered a one-step feedback model to analyze how (classification) decisions change the underlying population over time.

	\paragraph{Urn models.} Our model is also strongly related to the Polya urn model, which first appeared in \cite{EP23,Polya31} and features an urn containing red and blue balls. The urn evolves in discrete time steps. At each step, a ball is drawn uniformly at random, its color is observed, and then it is returned to the urn together with a new ball of the same color. Questions of interest include the  ratio of red to blue in the long term and the stochastic path leading to it.
	A generalization is where the urn has balls of  $k$ colors. Whenever a ball of color $i$ is drawn, $A_{i,j}$ balls of color $j$ are brought to the urn, for each $j \in \{1, \ldots, k\}$, where $\vec{A} = \{ A_{i,j} \}_{i,j \in [k]}$ is the ``schema''. A book on the Polya urns with schemas can be found in    \cite{mahmoud2008polya}.
	
	Another generalization of the Polya urn  was analyzed in~\cite{HLS80}, where there is  an urn with some  initial number of red and blue balls. In each iteration, if the fraction of red balls is $x$, then with probability $f(x)$ a red ball is added next and with probability $1-f(x)$ a blue ball is added \footnote{ That is, $x$ is  the number of red balls divided by the total number of balls.}, where    $f : [0,1] \to [0,1]$ is a function.  Another generalization, where the update function can be time-dependent,  was analyzed in \cite{Pemantle90}. Nonlinear Polya urn processes with fitness were studied in ~\cite{BFRT16}. Our model can be seen as an urn which is time dependent (since the probability of a red ball being added depends both on the fraction of red and the total number of balls, which is equivalent to time dependence), but where multiple balls are added at each point in time.

	The Polya urn model  illustrates the rich get richer phenomenon, also known as the ``Matthew effect'' or  ``cumulative  advantage''.   \cite{Pemantle2007} surveys urn models together with their  analysis techniques and applications, such as   growth of     social or neural networks, dynamics in games and  markets~\cite{Pemantle2007}. A probabilistic model for the early phase of neuron growth was proposed in~\cite{KK01}, where  growing neurites compete with each other and longer ones have more chances to grow. 

			\paragraph{Evolution of populations.} The evolution of populations has been studied  under  forces such as  genetic drift, selection, mutation, and migration~\cite{nowak2006evolutionary}.  Some of these processes can be seen as urn models, where the composition of different types of marbles (gene variants) in a jar changes over time.
	In the Wright-Fisher process, there is an initial generation with some pool of genes. Then new generations appear such that  each copy of a gene found in the new generation is drawn independently at random from the copies of the gene in the old generation; the generations do not overlap. In the Moran process~\cite{moran1958random}, the generations are overlapping. The self-organization of matter from the lens of evolving macro-molecules was studied in~\cite{eigen71}.  
	For an overview of evolutionary game dynamics, see~\cite{hofbauer2003evolutionary}. 
	The convergence, mixing time, and computational questions arising in the evolution of populations have also been studied (see, e.g., \cite{dixit2012finite,PSV16}).
	
	\paragraph{Inequality in games and markets.} At the level of countries, a global trend of escalating economic inequality has been reported ~\cite{UN_inequality}. Extensive discussion and modeling of this phenomenon can be found in  \cite{B94,PG15}.  
	To counter the mounting accumulation of wealth that seems intrinsically tied to capitalist economies, \cite{Piketty14} proposed strategies like substantial wealth and inheritance taxes.
	Mathematical analyses showing the emergence of inequality in games and markets  include  \cite{BMN18} and \cite{GKMPP19}.
	In  non-atomic congestion games, \cite{GKMPP19} showed that  introducing an optimal mechanism may settle or even exacerbate social inequality.
	
	In networked production economies where players  use proportional response dynamics to update their strategies,  \cite{BMN18}  showed that  players get differentiated over time into two classes: the ``rich'' (who participate in the most efficient production cycle) and the ``poor'' (who do not); moreover,  the inequality gaps    between the rich and the poor players grow unboundedly over time. 
	{In our model, biases of the different nodes (players) may cause a similar rich-get-richer phenomenon, even when diverse teams have better performance.}

	\subsection{Main contributions}
	This paper introduces a stochastic model for network growth in Sec.~\ref{sec:model} that captures how network ties can influence community growth in the network. The model has two main parameters: 
	the probability of forming  an edge (collaboration), captured by the probability matrix $\vec{p}$; and 
	the value/weight of an edge, captured by the weight matrix $\boldsymbol{\zeta}$.  Both the probability of forming an edge and its value depend  on the communities of its endpoints.

	We derive (Sec.~\ref{sec:detapprox},~\ref{sec:detsys}) and analyze (Sec.~\ref{sec:detanalysis}) a deterministic approximation of the stochastic model and show that it closely mirrors the behavior of the stochastic model. We find that the deterministic model with two communities has three fixed points, at $x = 0, 1/2, 1$, where $x$ is the fraction of red in the system. 
	
	We identify a key ratio, $\rho$, which depends on $\vec{p}$ and $\boldsymbol{\zeta}$. 
	When $\rho > 1$, only $0$ and $1$ are stable fixed points, which means that one of the communities will necessarily vanish. This is possible if the probability of cross-edges is  low, even if cross-community projects are more valuable than monochrome projects (i.e. projects with collaborators from the same community).  The situation is reversed when $\rho < 1$.

	\vspace{0.1in}
	\noindent
	{\bf Main Theorem (informal):} Consider the stochastic block model dynamics with two communities of Definition \ref{def:stochastic}. The dynamic starts with a given graph and proceeds to refresh its edges while  new nodes arrive through  preferential attachment {based on  weighted degree} in each round.
	
	The resulting dynamical system admits a deterministic approximation,  which unfolds   as a function of  the  minority's fraction. For the deterministic approximation system we have: 
	\begin{enumerate}[(i)]
		\item The fixed points of the system are at  $0$, $1/2$, and $1$.
		\item There exists a  parameter, $\rho$, dependent on the probability and weight matrix of the underlying stochastic block model, which determines the behavior of the system:
		\begin{itemize}
			\item If $\rho > 1$, the minority  vanishes in the limit, with the fixed points at $0$ and $1$ being stable and the fixed point at  $1/2$  unstable.
			\item If $\rho = 1$, the minority fraction remains constant throughout time.
			\item If $\rho < 1$, the minority eventually achieves parity, with  the fixed point at $1/2$ being stable and the fixed points at $0$  and $1$ unstable.
		\end{itemize}
	\end{enumerate}

	\section{ Model} \label{sec:model}
	
	We  define the base model, which generates a random graph through a  weighted stochastic block model. Then we  describe the growth dynamics, where new nodes arrive  while the edges are refreshed.

	\subsection{ Weighted Stochastic Block Model} 
	
	\smallskip 
	
	We consider the following random graph generation process. 
	Each node has a color, red or blue. 

	\begin{definition}[Stochastic Block Model]  \label{def:stochastic_block_model}
		Let $[n] = \{1, \ldots, n\}$ be a set of nodes, so that each node $i$ has color $c_i \in \{R,B\}$, where $R=1$ is red and $B=2$ is blue. Let  $\vec{p}=  \begin{bmatrix}
			{a} & {b} \\
			{b} & {a}
		\end{bmatrix}$  and $ 
		\boldsymbol{\zeta}=  \begin{bmatrix}
			\alpha & \beta  \\
			\beta & \alpha 
		\end{bmatrix}$  
		{where} $a,b,\alpha,\beta \in \mathbb{R}_+\,.$
		
   Generate a weighted random graph on $[n]$, so that  
for each pair of nodes $i,j \in [n]$ the weight of edge $(i,j)$ is:
		\begin{itemize} 
			\item With probability ${p_{c_i,c_j}}/{n}$, let      $w_{i,j} := \zeta_{c_i,c_j}$.  
			\item With  remaining probability, let  $w_{i,j} :=0$. 
		\end{itemize}
	\end{definition}
	
	We refer to $\vec{p}$ as the probability matrix and $\boldsymbol{\zeta}$ as the weight matrix. The matrix $\vec{p}$ dictates edge existence, while the weights of edges that exist are  dictated by  $\boldsymbol{\zeta}$.
	In  general formulations of the weighted stochastic block model, the weight of an edge is drawn from a distribution that depends on the communities of the endpoints. We focus on the special case where this distribution is supported at a single point.

	\noindent \textbf{\textit{Interpretation of the model.}}	We  illustrate the stochastic block model through a scenario of agents working on projects in an organization:
	
	\begin{quote}
		\emph{There is a set of agents $[n] = \{1, \ldots, n\}$ that belong to one of two communities, red or blue. Node  $ i $ has community  $c_i \in \{R,B\}$. }
		
		\emph{Edges between agents indicate collaborations on projects. Agents may collaborate in pairs, or work individually on projects. Each pair of agents $(i,j)$ can collaborate on at most one project.  Self loops are allowed: edge $(i,i)$ means agent $i$ works alone on a project. }
		
		\emph{The  probability that agents $i$ and $j$ collaborate on a joint project is $p_{c_i, c_j}/n$.
		Thus the probability of a project between agents $i$ and $j$ depends on the communities of  $i$ and $j$, and the total number of agents.}
		
		\emph{If agents $i$ and $j$   collaborate on a project, i.e. form edge $(i,j)$, then the weight  of the resulting project is $\zeta_{c_i, c_j}$. 
		The weight can indicate the value/impact of the project and depends only on the communities of agents $i$ and $j$.}
		
		\emph{An agent's success is determined by the collective weight of all of their projects, both collaborative and individual.}
	\end{quote}

	The dependence on the total number of nodes in the probability $p_{c_i,c_j}/n$ of generating edge $(i,j)$ ensures the expected total number of projects that an agent works on remains bounded even as the size of the graph grows. 

	\subsection{ Stochastic Block Model Dynamics}
	
	We first motivate our model for the stochastic block  dynamics,  then give the mathematical definition.

	\noindent \textbf{\em Motivation.}	The dynamics capture the connection between recruitment/hiring, and the existing climate, collaborations and related success in the organization through the following features: 
	\begin{quote}
		\emph{Hiring/recruitment happens at  discrete time intervals (e.g. every year for an academic institution).}
		\emph{Collaborative projects are refreshed on a similar time scale (e.g. every year). Thus, at every time step, edge connections between nodes are refreshed. }
		
		\emph{Red  agents recruit other red  agents, and blue agents recruit other blue agents. }
		
		\emph{The higher the weight of an agent the more likely they are to recruit a new agent from their community. This captures two things: more successful agents are more likely to recruit other agents (e.g. by giving more frequent talks where they advertise their organization); and  agents are more likely to join organizations where  agents from their community are happy and successful.} 
		
		\emph{New nodes arrive arrive by  preferential attachment. Red  nodes arrive at a rate that is proportional to the total weight of projects in the red community at that time; and similarly for blue agents.}
		
		\emph{Note  the number of red nodes recruited at a time does not depend only on the existing number of red nodes, but on the total weight of their  projects (i.e. their weighted degree). }
	\end{quote}

 Let $w_u = \sum_{v \in V} w_{u,v}$ denote the weighted degree of vertex $u \in V$ and $W = \sum_{u \in V} w_u$  the sum of weighted degrees of all nodes. Sampling a vertex with likelihood given by their weighted degree  means that vertex $ u \in V$ is selected with probability ${w_u}/{W}$.
	
	Next we define the stochastic block  dynamics, which  generates a sequence of graphs $\{G_t\}_{t=1}^{\infty}$. Given $G_{t-1}$, an intermediate random graph $G_{t-1}^+$ is obtained by refreshing the edges of $G_{t-1}$,  where an edge is refreshed by  drawing it again  according to the stochastic block model with probability matrix $\vec{p}$ and weight matrix $\zeta$.
	Then, we grow $G_{t-1}^+$ by having a number $(m_t)$ of nodes join using preferential attachment to obtain $G_t$.
	
	\medskip 
	\begin{definition}[\textbf{Stochastic Block Model Dynamics}]  \label{def:stochastic}
		Let ${V}_0 = \{1, \ldots, n_0\}$ be an initial set of vertices,   each  red or blue. 
		At every time  unit $t = 1,2,\ldots$, the next steps take place:
		\begin{enumerate}[1.]
			\item Generate an intermediate  graph  $G_{t-1}^+$ on vertices ${V}_{t-1}^+ := {V}_{t-1}$ from  the  stochastic block model with probability matrix $\vec{p}$ and weight matrix $\boldsymbol{\zeta}$ \textcolor{gray}{(\emph{e.g. the agents in ${V}_{t-1}$ start pairwise/solo projects})}. 

   \item Initialize ${G}_t = {G}_{t-1}^+$. Independently sample with replacement  $m_t$  vertices $v_1, \ldots, v_{m_t} \in {V}_{t-1}^+$,  each with likelihood given by their weighted degree.  Each sample $v_i$  brings in a new vertex $u_i$ with the same color and they form an edge.
			Vertex $u_i$ and  edge $(v_i, u_i)$ are added to ${G}_{t}$ \textcolor{gray}{(\emph{e.g., agent $v_i$ recruits $u_i$})}.
		\end{enumerate} 
		
		{\emph{Constant fraction of arriving nodes:}} We  focus on the scenario where the  number of nodes arriving is a constant fraction $\lambda \in (0,1)$ of the existing population. That is, the random variable $m_t$ for the number of arriving nodes is  $m_t \in \{\lfloor \lambda \cdot n_{t-1} \rfloor, \lceil \lambda \cdot n_{t-1} \rceil\}$ such that $\Ex[m_t] = \lambda \cdot n_{t-1}$, where  $n_{t-1}$ is the number of nodes in ${G}_{t-1}$.  \hfill  $\square$
	\end{definition}

	Constant fraction of arriving nodes  models the early phases of a growing organization. Generally speaking, other regimes for $m_t$ could also be of interest.
	
	W.l.o.g.,  red is the initial minority, so  $n_{0}^R \leq n_0^{B}$.

	\begin{figure*}[h!]
		\centering
		\subfigure[Set $V_0$.]
		{
			\includegraphics[scale=.82]{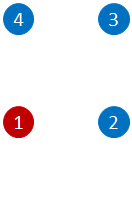}
			\label{fig:step1}
		}\;\;\;\;\;
		\subfigure[Graph $G_0^+$.]
		{
			\includegraphics[scale=.82]{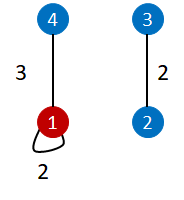}			\label{fig:step2}
		}\;\;\;\;\;
		\subfigure[Graph $G_0^+$; samples $1,3,3$.]
		{
			\includegraphics[scale=.82]{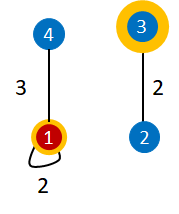}			\label{fig:step3}
		}\;\;\;\;\;
		\subfigure[Graph ${G}_1$. Each sample brings in a new node of the same color and they form an edge. That is, $1$ brings  $5$, the first occurrence of $3$ brings $6$, and the second occurrence of $3$ brings $7$.]
		{
			\includegraphics[scale=.82]{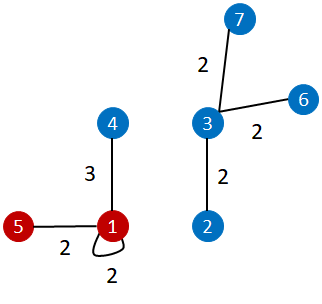}			\label{fig:step4}
		}
		\;\;
		\subfigure[Graph $G_1^+$. New edges are drawn between the vertices of $G_1$ to generate $G_1^+$.]
		{
			\includegraphics[scale=.82]{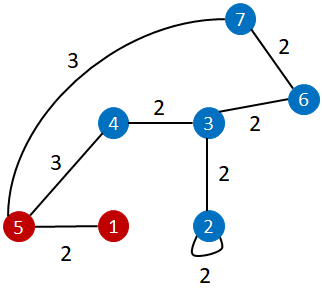}			\label{fig:step5}
		}
		\caption{Initial steps of the stochastic block model dynamics from  Definition~\ref{def:stochastic} illustrated. Suppose $\vec{p} = [[0.75, 0.25], [0.25, 0.75]]$,  the weights are $\boldsymbol{\zeta} = [[2, 3], [3, 2]]$, and $\lambda = 3/4$. Figure (a) shows the initial set of nodes $V_0 = \{1, 2, 3, 4\}$. Figure (b) shows a  realization of the random graph $G_0^+$ generated using the stochastic block model on $V_0$ with the parameters. Figure (c) shows the samples $(1, 3, 3)$ drawn from the vertices of $G_0^+$ according to the weighted degree distribution of $G_0^+$. Figure (d) shows how each sample $v_i$ brings in a new node $u_i$, which copies $v_i$'s color and they form an edge. The graph  obtained is $G_1$. Figure (e) shows graph $G_1^+$, generated using the  same parameters on the vertices of $G_1$. 
		}
		\label{fig:sequence_graph_generation}
	\end{figure*}

	\begin{figure*}[h!]
		\centering
		\subfigure[]
		{
			\includegraphics[scale=1.6]{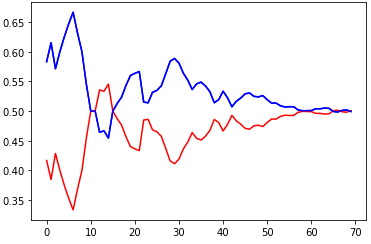}
			\label{fig:parity}
		}
		\subfigure[]
		{
			\includegraphics[scale =1.6]{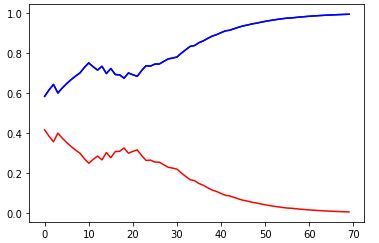}
			\label{fig:vanishes}
		}
		\caption{Trial runs of the stochastic block dynamics from Definition~\ref{def:stochastic}  with $11$ initial nodes. Red is the initial minority, starting with $5$ nodes in both cases. The $X$ axis shows time and the $Y$ axis shows the value of the fraction of red nodes. In (a) the fraction of red nodes reaches 50\% in the limit, while in (b) it goes to zero in the limit. The probability matrix is $\vec{p} = [[0.75, 0.25], [0.25, 0.75]]$ in both cases (a) and (b). The fraction of nodes arriving in each round is $\lambda = 0.1$. The weight matrix is $\boldsymbol{\zeta} = [[1, 100], [100, 1]]$ in (a) and $\boldsymbol{\zeta}  = [[100, 1], [1, 100]]$ in (b). 
		}
		\label{fig:stochastic_block_dynamics}
	\end{figure*}
	
	Figure~\ref{fig:sequence_graph_generation} illustrates the initial steps of the dynamic for a small number of initial nodes.
	A picture with trial runs of the dynamic that shows {the fraction of red nodes over time} is included in Figure~\ref{fig:stochastic_block_dynamics}.

	\newpage 
	
	\section{Deterministic Approximation}  \label{sec:detapprox}
	In this section we deduce a deterministic approximation for the random system of Definition~\ref{def:stochastic} and prove that it closely mirrors the stochastic model when the fraction of red (i.e. the minority fraction) is not arbitrarily close to zero.
	
	We first introduce some notation that will be useful. 		In the sequence of graphs from Definition~\ref{def:stochastic}, we index the vertices  $1, 2, \ldots$.   Thus if at some point the existing vertices are  $\{1, \ldots, k\}$, then  the next arriving node is denoted $k+1$.
	
	In the setting of Definition~\ref{def:stochastic}, for each time step $t \in \mathbb{N}$:
	\begin{itemize}
		\item  Let $\mathcal{F}_t$ be the history at the end of round $t$ for the process $\{G_t\}_{t=0}^{\infty}$.
		\item Let $\phi_t = \frac{n_{t}^R}{n_t}$ be the fraction of red nodes in ${G}_t$.
		\item Let $w_{i,j}^+(t-1)$ be the weight of edge $(i,j)$ in the graph $G_{t-1}^+$. 
		Let $R_t$ be the weight of color red and $B_t$ the weight of color blue, respectively, in graph $G_{t-1}^+$.  That is, 
		\begin{align} 
			R_t & = \sum_{i \in [n_{t-1}]: c_i = R} \; \; \sum_{j \in [n_{t-1}]} w_{c_i, c_j}^+(t-1)\,.  \\
			B_t & = \sum_{i \in [n_{t-1}]: c_i = B} \; \;  \sum_{j \in [n_{t-1}]} w_{c_i, c_j}^+(t-1)\,.
		\end{align}
		\item Let $m_t^R$ and $m_t^B = m_t - m_t^R$ be the number of red and blue nodes, respectively, arriving at time step $t$.
	\end{itemize}

	Next we prove there exists a function $\Gamma: [0,1] \to [0,1]$ such that   if $\phi_{t-1}$ (i.e. the fraction of the minority at time $t-1$) is above $1/\sqrt{n_{t-1}}$ and the initial number ($n_0$) of nodes is large enough, then 
	\begin{align}
		& \bigl(1-o(1)\bigr) \left(  \frac{ \phi_{t-1} + \lambda  \cdot  \Gamma(\phi_{t-1}) }{1 +  \lambda}\right) \leq \Ex[\phi_{t} \mid \mathcal{F}_{t-1}]  \leq  \bigl(1 + o(1) \bigr)  \left(  \frac{ \phi_{t-1} + \lambda  \cdot  \Gamma(\phi_{t-1}) }{1 +  \lambda} \right), \notag
	\end{align}
	where 
	\begin{align} \Gamma(x) & = \frac{{a \alpha}  \cdot  x^2   + {b \beta}  \cdot  x  \cdot (1 - x)}{ {a \alpha}  \cdot  x^2   + 2   {b \beta}  \cdot x  \cdot (1 - x) +  {a \alpha} \cdot  (1 - x)^2} \,. \label{eq:gamma}
	\end{align}
	This will allow us to conclude that the dynamical system evolves approximately as follows: 
	\begin{align}
		\Ex[\phi_{t} \mid \mathcal{F}_{t-1}] \approx \frac{ \phi_{t-1} + \lambda  \cdot  \Gamma(\phi_{t-1}) }{1 +  \lambda}\,. \label{eq:approximation_rough_estimate}
	\end{align}
	\begin{lemma} \label{lem:epsilon_fraction_bounded_away}
		Suppose  there is a constant $\epsilon \in (0, 1/2)$ so that  
		\begin{align} 
			\left(\frac{1}{n_{t-1}}\right)^{\frac{1}{2}-\epsilon} \leq \phi_{t-1}  \leq \frac{1}{2} \,. \label{eq:condition_red_bounded_away}
		\end{align}
		Then there exists a constant $C = C(\epsilon, a, \alpha, b, \beta)$ so that when $n_0 \geq C$, 
		\[
		\mathcal{L}(n_{t-1}) \leq  \Ex[\phi_t \mid \mathcal{F}_{t-1}] \leq \mathcal{U}(n_{t-1}),
		\]
		where $\mathcal{L}, \mathcal{U}: \mathbb{N} \to \mathbb{R}$   are defined by  
		\begin{align}
			\mathcal{U}(n) & = \frac{ \phi_{t-1} + \frac{\lceil \lambda \cdot n\rceil}{n} \cdot \left(1 + {1}/{n^{\epsilon/5}}\right) \cdot  \Gamma(\phi_{t-1}) }{1 + \frac{\lfloor \lambda n \rfloor}{n}}  \mbox{ and } \; 
			\mathcal{L}(n)  = \frac{ \phi_{t-1} + \frac{\lfloor \lambda \cdot n\rfloor}{n} \cdot \left(1 - {1}/{n^{\epsilon/5}}\right) \cdot  \Gamma(\phi_{t-1}) }{1 + \frac{\lceil \lambda n \rceil}{n }} \,. \label{def:mathcal_U_and_L} 
		\end{align}
	\end{lemma}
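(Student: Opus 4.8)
The plan is to condition on $\mathcal{F}_{t-1}$ (which fixes $n := n_{t-1}$, $n^R := n_{t-1}^R$, and $x := \phi_{t-1} = n^R/n$) and on the value of $m_t$, and to separate the two sources of randomness: the fresh stochastic block model producing $G_{t-1}^+$ (hence the color weights $R_t, B_t$), and the preferential-attachment sampling that, given $R_t, B_t$, draws each of the $m_t$ new nodes red independently with probability $R_t/(R_t+B_t)$. Since $R_t = \sum_{u:\, c_u = R} w_u$ is exactly the total weighted degree of the red vertices, a single weighted-degree sample is red with probability $R_t/(R_t+B_t)$, so $\Ex[m_t^R \mid R_t, B_t, m_t] = m_t\cdot R_t/(R_t+B_t)$. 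Writing $\phi_t = (n^R + m_t^R)/(n+m_t)$, the inner conditional expectation collapses to $\bigl(n^R + m_t\cdot R_t/(R_t+B_t)\bigr)/(n+m_t)$, and the whole argument reduces to controlling the ratio $R_t/(R_t+B_t)$.

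First I would compute the first two moments of $R_t, B_t$ from Definition~\ref{def:stochastic_block_model}. A direct calculation gives $\Ex[R_t] = n\bigl(a\alpha\, x^2 + b\beta\, x(1-x)\bigr)$ and $\Ex[B_t] = n\bigl(a\alpha\,(1-x)^2 + b\beta\, x(1-x)\bigr)$, so that $\Ex[R_t]/(\Ex[R_t]+\Ex[B_t])$ equals precisely $\Gamma(x)$ from \eqref{eq:gamma}; this identifies $\Gamma(x)$ as the target ratio. Because the edge weights are independent and bounded by $\max(\alpha,\beta)$, I would bound $\mathrm{Var}(R_t) = O(nx)$ and $\mathrm{Var}(B_t) = O(n)$, and use (by positivity of the parameters) the lower bounds $\Ex[R_t] \geq \tfrac12 b\beta\, n x$ and $\Ex[B_t] \geq \tfrac14 a\alpha\, n$ valid for $x \leq 1/2$. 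The role of hypothesis \eqref{eq:condition_red_bounded_away} is exactly to keep $\Ex[R_t]$ large relative to its standard deviation: the relative fluctuation of $R_t$ is $O(1/\sqrt{nx}) = O(n^{-(1/4+\epsilon/2)})$, which is $o(n^{-\epsilon/5})$.

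The core step is a concentration argument. Define the good event $E = \{\,|R_t-\Ex[R_t]| \leq \tau \Ex[R_t]\,\} \cap \{\,|B_t - \Ex[B_t]| \leq \tau \Ex[B_t]\,\}$ with $\tau = \tfrac14 n^{-\epsilon/5}$. On $E$ an elementary ratio estimate yields $|R_t/(R_t+B_t) - \Gamma(x)| \leq \tfrac12 \Gamma(x)\, n^{-\epsilon/5}$, while a Bernstein (or Hoeffding) bound for the bounded independent summands gives $\Pr[E^c] \leq \exp(-\Omega(n^{1/2+3\epsilon/5}))$, since $\tau$ exceeds the relative standard deviation by the growing factor $n^{1/4+3\epsilon/10}$. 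For the upper bound I would split $\Ex[\phi_t \mid \mathcal{F}_{t-1}, m_t]$ over $E$ and $E^c$, using $R_t/(R_t+B_t) \leq \Gamma(x)(1 + \tfrac12 n^{-\epsilon/5})$ on $E$ and the trivial bound $\phi_t \leq 1$ on $E^c$; the additive term $\Pr[E^c]$ is absorbed into the remaining slack $\tfrac12\Gamma(x)n^{-\epsilon/5}$ because $\Gamma(x) \gtrsim x \gtrsim n^{-(1/2-\epsilon)}$ dominates $\Pr[E^c]$ once $n_0 \geq C(\epsilon,a,\alpha,b,\beta)$. The lower bound is symmetric, using $R_t/(R_t+B_t) \geq \Gamma(x)(1-\tfrac12 n^{-\epsilon/5})$ on $E$ and $\phi_t \geq n^R/(n+m_t)$ on $E^c$. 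Finally the floor/ceil bookkeeping follows by bounding numerator and denominator separately: since all quantities are nonnegative and $m_t \in \{\lfloor\lambda n\rfloor, \lceil\lambda n\rceil\}$, for any $c \geq 0$ we get $(n^R + m_t c)/(n+m_t) \leq (n^R + \lceil\lambda n\rceil c)/(n+\lfloor\lambda n\rfloor)$ and the reverse with floors and ceilings swapped, which after dividing by $n$ are exactly $\mathcal{U}(n)$ and $\mathcal{L}(n)$ from \eqref{def:mathcal_U_and_L}. As these per-$m_t$ bounds do not depend on the realized $m_t$, averaging over $m_t$ preserves them.

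I expect the main obstacle to be the concentration step in the small-$x$ regime: near the threshold $x \approx n^{-(1/2-\epsilon)}$ the red weight $R_t$ is a sum of relatively few nonzero terms, so one must verify carefully that its standard deviation is still a vanishing fraction of its mean and that the resulting tail probability is negligible against the $n^{-\epsilon/5}$ slack built into $\mathcal{U}$ and $\mathcal{L}$. The remaining work---choosing the constant $C(\epsilon,a,\alpha,b,\beta)$ so that every ``for large $n$'' comparison holds simultaneously, and noting that $n_{t-1}\geq n_0\geq C$ since the population only grows---is routine bookkeeping.
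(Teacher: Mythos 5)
Your proposal is correct and follows essentially the same route as the paper: identify the per-arrival red probability as $R_t/(R_t+B_t)$, show $\Ex[R_t \mid \mathcal{F}_{t-1}]/\Ex[R_t+B_t \mid \mathcal{F}_{t-1}] = \Gamma(\phi_{t-1})$, control the ratio via exponential concentration of the weight sums, absorb the failure event using the polynomial lower bound on the ratio that hypothesis \eqref{eq:condition_red_bounded_away} guarantees, and finish with the floor/ceiling bookkeeping over $m_t$. The only difference is packaging: the paper isolates the concentration step as a separate appendix lemma (Lemma~\ref{lem:expected_ratio_approx_ratio_expectations}, proved by Chernoff bounds on the scaled binomials $RR/\alpha$, $RB/\beta$, $BB/\alpha$ with tolerance $n^{-\epsilon/2}$) bounding $\Ex\left[R_t/(R_t+B_t)\right]$ against $\Ex[R_t]/\Ex[R_t+B_t]$, whereas you run Bernstein on $R_t, B_t$ directly and split over the good event at the level of $\Ex[\phi_t]$; the substance is identical.
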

	\begin{proof} 
		For a fixed round $t$, each arriving node has the same probability of being red. The probability that one arriving node  is red  equals $R_t/(R_t + B_t)$. Taking expectation, since the number of arriving nodes is $m_t \in \{ \lfloor \lambda  \cdot  n_{t-1} \rfloor, \lceil \lambda \cdot  n_{t-1} \rceil\}$, we  have 
		\begin{align} 
			\sum_{i=1}^{\lfloor \lambda \cdot n_{t-1}\rfloor } \Ex\left[ \frac{R_t}{R_t + B_t} \mid \mathcal{F}_{t-1} \right]  & \leq \Ex\left[m_t^R \mid \mathcal{F}_{t-1}\right]  \leq \sum_{i=1}^{\lceil \lambda \cdot n_{t-1}\rceil} \Ex\left[ \frac{R_t}{R_t + B_t} \mid \mathcal{F}_{t-1} \right] \,. \label{eq:bounds_sum_m_R_t}
		\end{align}
		
		By Lemma~\ref{lem:expected_ratio_approx_ratio_expectations}, we have 
		\begin{align}
			\left(1 - {1}/{n_{t-1}^{\epsilon/5}}\right) \cdot \frac{\Ex[R_t \mid \mathcal{F}_{t-1} ]}{\Ex[{R_t} + {B_t} \mid \mathcal{F}_{t-1}]}  \leq    \Ex\left[ \frac{R_t}{R_t + B_t} \mid \mathcal{F}_{t-1} \right] \leq  \left(1 + {1}/{n_{t-1}^{\epsilon/5}}\right) \cdot \frac{\Ex[R_t \mid \mathcal{F}_{t-1} ]}{\Ex[{R_t} + {B_t} \mid \mathcal{F}_{t-1}]}  \,. \label{eq:bounds_prob_red}
		\end{align}
		
		Combining \eqref{eq:bounds_sum_m_R_t} and \eqref{eq:bounds_prob_red}, we obtain that  under \eqref{eq:condition_red_bounded_away}, 
		\begin{align}
			\lfloor \lambda \cdot n_{t-1} \rfloor  \left(1 - {1}/{n_{t-1}^{\epsilon/5}}\right) \frac{\Ex[R_t \mid \mathcal{F}_{t-1} ]}{\Ex[{R_t} + {B_t} \mid \mathcal{F}_{t-1}]} & \leq  \Ex\left[m_t^R \mid \mathcal{F}_{t-1}\right]  \notag \\
			& \leq  \lceil \lambda \cdot n_{t-1}\rceil  \left(1 + {1}/{n_{t-1}^{\epsilon/5}}\right) \frac{\Ex[R_t \mid \mathcal{F}_{t-1} ]}{\Ex[{R_t} + {B_t} \mid \mathcal{F}_{t-1}]} \,.  \label{eq:intermediate_bounds_expected_m_R_t}
		\end{align}
		
		Since the graphs $G_{t-1}^+$ and $G_{t-1}$ have the same vertices, we have 
		\begin{align} 
			& \Ex[{R_t} \mid \mathcal{F}_{t-1}]   =  \left(n_{t-1}^{R}\right)^2 \cdot \frac{a \alpha}{n_{t-1}} + n_{t-1}^R n_{t-1}^B \cdot \frac{b \beta}{n_{t-1}}\; ; \notag  \\
			& \Ex[{B_t}  \mid \mathcal{F}_{t-1}]  = \left( n_{t-1}^B \right)^2  \cdot \frac{a \alpha}{n_{t-1}} + n_{t-1}^R n_{t-1}^B \cdot \frac{b \beta}{n_{t-1}}\,.  \label{eq:expected_red_t_and_blue_t}
		\end{align}

		Combining the identities in \eqref{eq:expected_red_t_and_blue_t} gives 
		\begin{align}
			\frac{\Ex[R_t \mid \mathcal{F}_{t-1}]}{\Ex[R_t+B_t \mid \mathcal{F}_{t-1}]}  
			& = \frac{{a \alpha}   \left(n_{t-1}^{R}\right)^2   + {b \beta}  \left(  n_{t-1}^{R} \cdot n_{t-1}^{B} \right)}{ {a \alpha}   \left(n_{t-1}^{R} \right)^2   + 2   {b \beta}  \left(  n_{t-1}^{R} \cdot n_{t-1}^{B} \right) +  {a \alpha}  \left( n_{t-1}^{B} \right)^2 } \notag \\
			& = \frac{{a \alpha}    \left(\phi_{t-1}\right)^2   + {b \beta}    \phi_{t-1}   (1 - \phi_{t-1})}{ {a \alpha}    \left(\phi_{t-1}\right)^2   + 2   {b \beta}    \phi_{t-1}   (1 - \phi_{t-1}) +  {a \alpha}   (1 - \phi_{t-1})^2}  \notag \\
			& = \Gamma(\phi_{t-1})\,. \label{eq:ratio_of_expectations_gamma_xi_tminus1}
		\end{align}
		The number of red nodes in $\mathcal{G}_t$ is  $n_t^R = n_{t-1}^R + m_t^R$, so $\phi_t = \frac{n_{t-1}^R + m_{t}^R}{n_{t}}$. By definition of the dynamic,  
		\begin{align}
			& n_{t} \in \Bigl\{ n_{t-1} + \lfloor \lambda n_{t-1} \rfloor,  n_{t-1} + \lceil  \lambda n_{t-1} \rceil \Bigr\} \mbox{ with } \Ex[n_t \mid \mathcal{F}_{t-1}] = (1 + \lambda)n_{t-1}\,. \label{eq:simple_facts_n_t_v2}
		\end{align}
		Using \eqref{eq:intermediate_bounds_expected_m_R_t},  \eqref{eq:ratio_of_expectations_gamma_xi_tminus1}, and \eqref{eq:simple_facts_n_t_v2}, we can upper bound the expected fraction of red at time $t$  when   \eqref{eq:condition_red_bounded_away} holds:
		\begin{align}
			\Ex[\phi_t \mid \mathcal{F}_{t-1}] & \leq   \frac{n_{t-1}^R + \Ex[m_{t}^R \mid \mathcal{F}_{t-1}]}{n_{t-1} + \lfloor \lambda n_{t-1} \rfloor } 
			= \frac{ \phi_{t-1} + \frac{\lceil \lambda \cdot n_{t-1}\rceil}{n_{t-1}} \cdot \left(1 + {1}/{n_{t-1}^{\epsilon/5}}\right) \cdot  \Gamma(\phi_{t-1}) }{1 + \frac{\lfloor \lambda n_{t-1} \rfloor}{n_{t-1}}}  \,. \label{eq:ub_on_update_rule}
		\end{align}
		
		Similarly, we can lower bound the expected fraction of red at time $t$ as follows when   \eqref{eq:condition_red_bounded_away} holds:
		\begin{align}
			\Ex[\phi_t \mid \mathcal{F}_{t-1}] & \geq    \frac{n_{t-1}^R + \Ex[m_{t}^R \mid \mathcal{F}_{t-1}]}{n_{t-1} + \lceil  \lambda n_{t-1} \rceil } 
			= \frac{ \phi_{t-1} + \frac{\lfloor \lambda \cdot n_{t-1}\rfloor}{n_{t-1}} \cdot \left(1 - {1}/{n_{t-1}^{\epsilon/5}}\right) \cdot  \Gamma(\phi_{t-1}) }{1 + \frac{\lceil \lambda n_{t-1} \rceil}{n_{t-1}}}   \,. \label{eq:lb_on_update_rule}
		\end{align}

		Combining \eqref{eq:ub_on_update_rule}, \eqref{eq:lb_on_update_rule}, \eqref{def:mathcal_U_and_L}, we obtain
		$\mathcal{L}(n_{t-1}) \leq  \Ex[\phi_t \mid \mathcal{F}_{t-1}] \leq \mathcal{U}(n_{t-1})\,.$
		This completes the proof of the lemma.
	\end{proof}

	The proof of the next lemma is deferred to the appendix.
	\begin{lemma}  \label{lem:expected_ratio_approx_ratio_expectations}
		In the setting of Lemma~\ref{lem:epsilon_fraction_bounded_away}, 
		\begin{align}
			\left(1 - \frac{1}{n_{t-1}^{\epsilon/4}}\right) \frac{\Ex[{R_t} ]}{\Ex[{R_t} + {B_t} ]}  < \frac{{R_t}}{{R_t} + {B_t}} 
			< \left(1 + \frac{1}{n_{t-1}^{\epsilon/4}}\right) \frac{\Ex[{R_t}]}{\Ex[{R_t} + {B_t}]} \,. \notag 
		\end{align}
		with  probability at least $1 - \frac{8}{e^{C_1 \cdot n_{t-1}^{\epsilon}}}$, where $C_1= \min\left\{ {a}/{12}, {b}/{6}\right\}\,.$ 
		Moreover, 
		\begin{align}
			\left(1 - \frac{1}{n_{t-1}^{\epsilon/5}}\right) \frac{\Ex[{R_t}]}{\Ex[{R_t} + {B_t}]}  < \Ex\left[\frac{{R_t}}{{R_t} + {B_t}} \right] 
			<  \left(1 + \frac{1}{n_{t-1}^{\epsilon/5}}\right) \frac{\Ex[{R_t}]}{\Ex[{R_t} + {B_t}]} \,. \notag 
		\end{align}
	\end{lemma}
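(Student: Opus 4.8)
The plan is to prove the two claims in sequence: first the high-probability two-sided bound on $\frac{R_t}{R_t+B_t}$, and then to transfer it into the bound on $\Ex\!\left[\frac{R_t}{R_t+B_t}\right]$ by a standard bounded-variable argument. Throughout I condition on $\mathcal{F}_{t-1}$, so the color composition $n_{t-1}^R, n_{t-1}^B$ is fixed and only the freshly drawn edges of $G_{t-1}^+$ are random. The starting observation is that $R_t$ and $B_t$ are linear functions, with nonnegative constant coefficients, of three independent binomial-type counts: the number $N_{RR}$ of red--red edges, the number $N_{RB}$ of red--blue edges, and the number $N_{BB}$ of blue--blue edges. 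Indeed, a present red--red edge contributes $2\alpha$ to $R_t$, a present red--blue edge contributes $\beta$ to each of $R_t$ and $B_t$, and so on, so that $R_t = 2\alpha N_{RR} + \beta N_{RB}$ and $B_t = 2\alpha N_{BB} + \beta N_{RB}$ up to lower-order self-loop terms. Since distinct edges are drawn independently, each count is a sum of independent Bernoulli variables, and concentrating $R_t, B_t$ reduces to concentrating $N_{RR}, N_{RB}, N_{BB}$.

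For the first claim, I would lower bound the relevant means using hypothesis \eqref{eq:condition_red_bounded_away}. From \eqref{eq:expected_red_t_and_blue_t} the binding quantity is the minority weight, whose mean satisfies $\Ex[R_t \mid \mathcal{F}_{t-1}] \ge a\alpha (n_{t-1}^R)^2/n_{t-1} \ge a\alpha\, n_{t-1}^{2\epsilon}$, because $\phi_{t-1} \ge n_{t-1}^{-(1/2-\epsilon)}$ forces $n_{t-1}^R \ge n_{t-1}^{1/2+\epsilon}$; the denominator mean $\Ex[R_t+B_t \mid \mathcal{F}_{t-1}]$ is of order $n_{t-1}$ and hence far larger. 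Equivalently, $\Ex[N_{RR}] \ge \tfrac{a}{2} n_{t-1}^{2\epsilon}$, the smallest of the three means, while $\Ex[N_{RB}]$ and $\Ex[N_{BB}]$ are polynomially larger. Applying a multiplicative Chernoff bound to each count with relative deviation $\delta \asymp n_{t-1}^{-\epsilon/4}$, the worst exponent comes from the smallest mean $\Ex[N_{RR}]$ and equals $\Theta(\delta^2 \Ex[N_{RR}]) = \Theta(n_{t-1}^{3\epsilon/2})$, which is at least $C_1\, n_{t-1}^{\epsilon}$ with $C_1 = \min\{a/12, b/6\}$ once the Chernoff constants $1/2, 1/3$ and the $a$- (resp. $b$-) dependence of the red--red (resp. red--blue) edge probabilities are tracked. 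A union bound over the upper and lower tails of the (constantly many) counts yields the stated failure probability $8\,e^{-C_1 n_{t-1}^{\epsilon}}$. On the complementary good event, writing $R_t = (1\pm\delta)\Ex[R_t]$ and $R_t+B_t = (1\pm\delta)\Ex[R_t+B_t]$ and using $\frac{1+\delta}{1-\delta} \le 1 + n_{t-1}^{-\epsilon/4}$ for $n_0$ large gives the first displayed inequality.

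For the second claim, note that $Y := \frac{R_t}{R_t+B_t} \in [0,1]$ deterministically. Splitting the expectation over the good event $G$ (of probability at least $1 - 8 e^{-C_1 n_{t-1}^{\epsilon}}$) and its complement, and writing $r := \frac{\Ex[R_t]}{\Ex[R_t+B_t]} = \Gamma(\phi_{t-1})$ via \eqref{eq:ratio_of_expectations_gamma_xi_tminus1}, I would bound $\Ex[Y] \le (1+n_{t-1}^{-\epsilon/4})\,r + 8 e^{-C_1 n_{t-1}^{\epsilon}}$ and $\Ex[Y] \ge (1-n_{t-1}^{-\epsilon/4})\,r - 8 e^{-C_1 n_{t-1}^{\epsilon}}$, using $Y \in [0,1]$ on $G^c$. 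It then suffices to show that the exponentially small additive error $8 e^{-C_1 n_{t-1}^{\epsilon}}$ is dominated by $(n_{t-1}^{-\epsilon/5} - n_{t-1}^{-\epsilon/4})\,r$, which is exactly the slack created by weakening the relative error from $\epsilon/4$ to $\epsilon/5$. Here the form \eqref{eq:gamma} of $\Gamma$ gives only a polynomial lower bound $r \ge c\,\phi_{t-1}^2 \ge c\, n_{t-1}^{-(1-2\epsilon)}$ (and $r \gtrsim \phi_{t-1}$ when $b\beta>0$), so $r$ is polynomially small while the bad-event term is exponentially small; choosing $C = C(\epsilon,a,\alpha,b,\beta)$ large enough makes the inequality hold for all $n_0 \ge C$, yielding the second displayed inequality.

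The main obstacle is the first claim in the regime where cross-community edges carry no value ($b\beta = 0$) or where $\phi_{t-1}$ sits at the boundary $n_{t-1}^{-(1/2-\epsilon)}$: there the minority mean $\Ex[R_t]$ is only of polynomial order $n_{t-1}^{2\epsilon}$, so the Chernoff exponent barely clears the target $n_{t-1}^{\epsilon}$, and hypothesis \eqref{eq:condition_red_bounded_away} is precisely what keeps it above that threshold --- without a lower bound on the minority fraction the ratio would not concentrate at all. The second, more routine, obstacle is verifying that the polynomial lower bound on $\Gamma(\phi_{t-1})$ genuinely dominates the exponential bad-event term across the whole admissible range of $\phi_{t-1}$, which is what dictates the choice of $C$ and the $\epsilon/4 \to \epsilon/5$ loss.
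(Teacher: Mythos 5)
Your proposal is correct and follows essentially the same route as the paper's proof: decompose the weights into the three edge-type counts (red--red, red--blue, blue--blue), apply a multiplicative Chernoff bound with a union bound to get the high-probability two-sided bound on the ratio, and then transfer to the expectation using $\frac{R_t}{R_t+B_t} \in [0,1]$ together with the observation that the polynomial lower bound on $\frac{\Ex[R_t]}{\Ex[R_t+B_t]}$ (guaranteed by condition \eqref{eq:condition_red_bounded_away}) dominates the exponentially small bad-event term, which is exactly the source of the $\epsilon/4 \to \epsilon/5$ loss. The only differences are cosmetic: you take the Chernoff deviation $\delta \asymp n_{t-1}^{-\epsilon/4}$ instead of the paper's $n_{t-1}^{-\epsilon/2}$, and you lower bound $\Gamma(\phi_{t-1})$ by $c\,\phi_{t-1}^2$ via the $a\alpha$ term where the paper uses the $b\beta$ term to get a bound of order $\phi_{t-1}$; both suffice.
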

	
	By Lemma~\ref{lem:epsilon_fraction_bounded_away}, when the fraction of the minority, $\phi_{t-1}$, is strictly above $1/\sqrt{n_{t-1}}$, we have  
	\[ 
	\mathcal{L}(n_{t-1}) \leq  \Ex[\phi_t \mid \mathcal{F}_{t-1}] \leq \mathcal{U}(n_{t-1}).
	\] 
	Moreover,   
	\begin{align} 
		\lim_{n \to \infty} \mathcal{L}(n) = \lim_{n \to \infty} \mathcal{U}(n) =  \frac{ \phi_{t-1} + \lambda  \cdot  \Gamma(\phi_{t-1}) }{1 +  \lambda} \,. \label{eq:limit_LU}
	\end{align}
	
	\section{Deterministic System} \label{sec:detsys}
	
	The approximation in \eqref{eq:approximation_rough_estimate}, which in the limit is given by the right hand side of \eqref{eq:limit_LU},   motivates us to introduce the following deterministic system to approximate the behavior of the stochastic system for large number of nodes.

	\begin{definition}[Deterministic System]  \label{def:deterministic}
		Let $\Phi_0 \in (0,1/2]$. Define   the sequence $\{\Phi_{t}\}_{t=0}^{\infty}$      such that  
		\begin{align} 
			\Phi_{t} =  \frac{\Phi_{t-1} + \lambda \cdot \Gamma(\Phi_{t-1})}{1 + \lambda },  
		\end{align} 
		recalling that $\Gamma(x)$ \begin{align}
			\Gamma(x) = \frac{{a \alpha}  \cdot  x^2   + {b \beta}  \cdot  x  \cdot (1 - x)}{ {a \alpha}  \cdot  x^2   + 2   {b \beta}  \cdot x  \cdot (1 - x) +  {a \alpha} \cdot  (1 - x)^2} \,. \notag 
		\end{align} 
	\end{definition}
	Figures~\ref{fig:deterministic__random_side_by_side_parity} and~\ref{fig:deterministic__random_side_by_side_vanish} show a  trajectory of the deterministic system of Definition~\ref{def:deterministic} together with the trajectory of  the stochastic system of Definition~\ref{def:stochastic}  started from the same initial configuration. 
	\begin{figure*}[h!]
		\centering
		\subfigure[Random system]
		{
			\includegraphics[scale = 1.47]{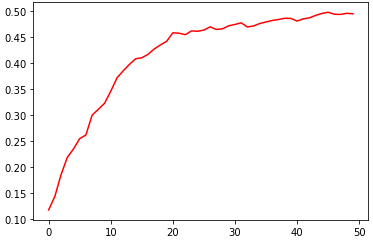}
			\label{fig:parity_rand}
		}
		\subfigure[Deterministic system]
		{
			\includegraphics[scale = 0.55]{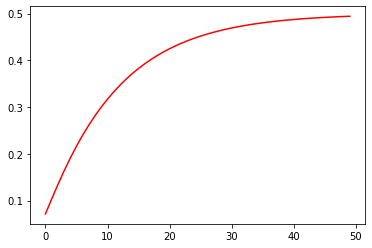}
			\label{fig:parity_det}
		}
		\caption{Figure (a) shows the random system with  $n=70$ initial nodes such that $n_R = 5$ are red. The probability matrix is $\vec{p}=  \begin{bmatrix}
				{0.75} & {0.25} \\
				{0.25} & {0.75}
			\end{bmatrix}$ and the weight matrix is $ 
			\vec{\omega}=  \begin{bmatrix}
				1 & 100  \\
				100 & 1 
			\end{bmatrix}$. We have $\rho \approx 0.03$.  The fraction of nodes arriving in each round is $\lambda = 0.1$.  Figure (b) shows the corresponding deterministic system, for the same initial parameters, as $n \to \infty$.}
		\label{fig:deterministic__random_side_by_side_parity}
	\end{figure*}

	\begin{figure*}[h!]
		\centering
		\subfigure[Random system]
		{
			\includegraphics[scale = 0.55]{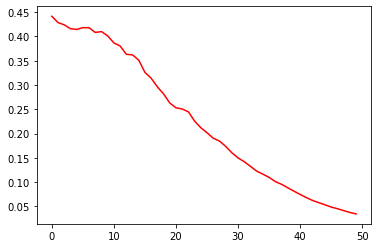}
			\label{fig:parity_rand_vanish}
		}
		\subfigure[Deterministic system]
		{
			\includegraphics[scale = 0.55]{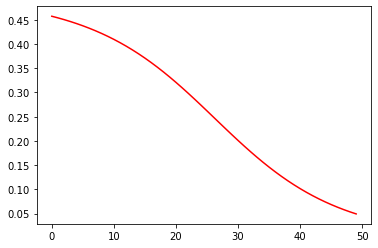}
			\label{fig:parity_det_vanish}
		}
		\caption{Figure (a) shows the random system with  $n=70$ initial nodes such that $n_R = 32$ are red. The probability matrix is $\vec{p}=  \begin{bmatrix}
				{0.95} & {0.05} \\
				{0.05} & {0.95}
			\end{bmatrix}$ and the weight matrix is $ 
			\vec{\omega}=  \begin{bmatrix}
				1 & 1.2  \\
				1.2 & 1 
			\end{bmatrix}$. We have $\rho \approx 15.8$.  The fraction of nodes arriving in each round is $\lambda = 0.1$.  Figure (b) shows the corresponding deterministic system, for the same initial parameters, as $n \to \infty$.}
		\label{fig:deterministic__random_side_by_side_vanish}
	\end{figure*}

	The deterministic model can be thought of as a version of the stochastic system where each node is an infinitesimal particle  (e.g. of some fluid such as water), which is partially red and partially blue instead of a discrete node with one color. 
	
	Recall that at each time step in the stochastic system, every arriving node $u$ becomes red with some  probability $p$ and blue with probability $1-p$. The deterministic system captures a similar dynamic, but each arriving node $u$ is an infinitesimal particle which has a mix of two colors:  red in proportion of $p$ and  blue in proportion of $1-p$. At each point in time, if there are currently $x$ units of fluid, then $\lambda x$ units are added in the next iteration. Explicit node edges are no longer considered in the deterministic approximation, but instead we have the product of the amount of red fluid and the amount of blue fluid to correspond to the product of red-blue edges.
	
	The variable of interest will be the evolution of the ratio red/blue over time.

	\section{Analysis of the Deterministic System} \label{sec:detanalysis}
	Next, we understand how the ratio of red to blue evolves over time in the deterministic system, by identifying the fixed points of the model, and their stability properties. In Lemma~\ref{lem:x_t_update_function} first compute a simplified update function for the system of Definition~\ref{def:deterministic}, which shows that the key parameters are the fraction of red at each point in time and the constant ratio $\rho := {a\alpha}/{(b \beta)}$. 
	
Proposition~\ref{prop1} shows that if $\rho > 1$ then the fraction of the minority (red) can die out regardless of the initial state.

	\begin{lemma} \label{lem:x_t_update_function}
		The  dynamical system in  Definition~\ref{def:deterministic} has update function, i.e. $f: [0,1] \to [0,1]$ {such that $\Phi_{t+1} = f(\Phi_{t})$} given by:
		\begin{align*} 
			f(x) = \frac{2x^3 (\rho - 1) -x^2 (\rho-1)(2- \lambda) + x(\rho + \lambda)}{(1+\lambda) \Bigl(2 x^2 (\rho - 1) - 2x (\rho - 1) + \rho \Bigr) }, \; \; \mbox{ where } \rho := a \alpha / (b \beta)\,.
			\label{def:update_function_deterministic} 
		\end{align*}
	\end{lemma}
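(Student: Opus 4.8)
The plan is to start directly from the definition of the deterministic update in Definition~\ref{def:deterministic}, namely $f(x) = \bigl(x + \lambda \Gamma(x)\bigr)/(1+\lambda)$, and to reduce $\Gamma$ to a form that exhibits the single parameter $\rho = a\alpha/(b\beta)$. The first step is to observe that both the numerator and the denominator of $\Gamma(x)$ in~\eqref{eq:gamma} are homogeneous of degree one in the pair $(a\alpha,\, b\beta)$, so dividing through by $b\beta$ replaces every occurrence of $a\alpha$ by $\rho$ and every occurrence of $b\beta$ by $1$. This collapses the four-parameter dependence of the underlying stochastic model into the single ratio $\rho$, which is really the conceptual content of the lemma: only the product-ratio $a\alpha/(b\beta)$ matters, not the four quantities individually.

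After this normalization I would expand and regroup the denominator $D(x) := \rho x^2 + 2x(1-x) + \rho(1-x)^2$. Using $\rho\bigl(x^2 + (1-x)^2\bigr) = \rho(2x^2 - 2x + 1)$ together with $2x(1-x) = 2x - 2x^2$, this rewrites as $D(x) = 2x^2(\rho-1) - 2x(\rho-1) + \rho$, which matches (up to the factor $1+\lambda$) the denominator claimed in the lemma. Similarly the numerator simplifies to $N(x) := \rho x^2 + x(1-x) = x^2(\rho-1) + x$, so that $\Gamma(x) = N(x)/D(x)$.

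The final step is to put $f(x) = \bigl(x + \lambda N(x)/D(x)\bigr)/(1+\lambda)$ over the common denominator $(1+\lambda)D(x)$ and expand the resulting numerator $x\,D(x) + \lambda\,N(x)$. Substituting the regrouped forms gives $x\,D(x) = 2x^3(\rho-1) - 2x^2(\rho-1) + \rho x$ and $\lambda\,N(x) = \lambda x^2(\rho-1) + \lambda x$; collecting the cubic term as $2x^3(\rho-1)$, the quadratic terms as $-x^2(\rho-1)(2-\lambda)$, and the linear terms as $x(\rho+\lambda)$ yields exactly the claimed numerator, completing the identity.

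The main obstacle is not any deep step but careful bookkeeping: one must track the sign and grouping of the $(\rho-1)$ coefficients when expanding the $(1-x)^2$ term, since a sign slip in $D(x)$ propagates simultaneously into the denominator and into the $x^2$ coefficient of the numerator. The only genuine insight — that the dynamics depend on $a,\alpha,b,\beta$ solely through $\rho$ — is realized in the very first normalization step, and everything afterward is verification that the algebra closes into the stated rational function.
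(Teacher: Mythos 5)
Your proposal is correct and follows the same route as the paper's proof: substitute $\Gamma$ into $f(x) = \bigl(x + \lambda\,\Gamma(x)\bigr)/(1+\lambda)$, normalize by $b\beta$ to introduce $\rho$, and expand over the common denominator. The paper compresses the algebra into one line, whereas you spell out the intermediate forms $N(x) = x^2(\rho-1)+x$ and $D(x) = 2x^2(\rho-1)-2x(\rho-1)+\rho$, but the argument is identical in substance.
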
	
	\begin{proof}
		By Definition~\ref{def:deterministic}, we have 
		$\Phi_{t+1} =  \frac{\Phi_t + c \cdot \Gamma(\Phi_t)}{1 + \lambda }, $
		where $\Gamma(x)$ is as in~\eqref{eq:gamma}.		
		
		Consider the function $f:[0,1] \to [0,1]$ given by 
		$f(x) = \bigl({x + \lambda \cdot \Gamma(x)}\bigr)/\bigl({1+ \lambda}\bigr)$.
		Substituting $\Gamma(x)$ in the expression for $f(x)$ and re-arranging the terms gives 
		\begin{align*}
			f(x) & = \frac{x + \lambda \cdot \left( \frac{a \alpha \cdot x^2 + b\beta \cdot x \bigl(1 - x\bigr)}{a \alpha \cdot x^2  + 2 b \beta \cdot x \bigl(1 - x\bigr) +  a \alpha \cdot\bigl(1 - x\bigr)^2}\right) }{1 +  \lambda  }  \notag \\
			& = \frac{2x^3 (\rho - 1) -x^2 (\rho-1)(2- \lambda) + x(\rho + \lambda)}{(1+ \lambda) \left(2 x^2 (\rho - 1) - 2x (\rho - 1) + \rho \right) } 
		\end{align*} 
		where the equality is obtained by dividing by $b \beta$,  substituting $\rho = a \alpha / (b \beta)$ and simplifying.
	\end{proof}

	\begin{figure*}[h!]
		\centering
		\subfigure[The yellow line shows the function $f$ with $\rho = 12$ and $\lambda=2$. The blue line shows the identity function. The fixed points $0$ and $1$ of $f$ are stable while $1/2$ is unstable.]
		{
			\includegraphics[scale = 0.6]{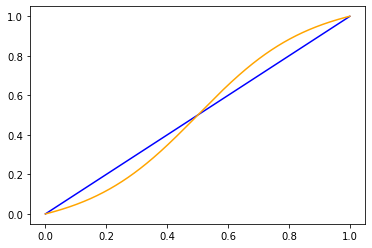}
		}
		\subfigure[The derivative $f'$ when $\rho = 12$ and $ \lambda =2$.]
		{
			\includegraphics[scale=0.6]{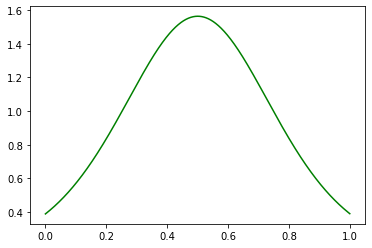}
		}
		\subfigure[The function $f$ when $\rho = 0.2$ and $\lambda=2$. The blue line shows the identity function. The fixed points $0$ and $1$ of $f$ are unstable while $1/2$ is stable.]
		{
			\includegraphics[scale = 0.6]{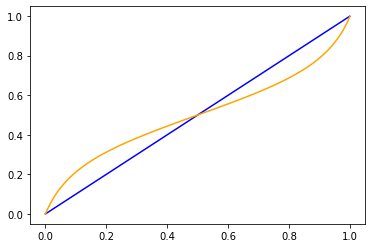}
		}
		\subfigure[The derivative $f'$ when $\rho = 0.2$ and $\lambda=2$.]
		{
			\includegraphics[scale=0.6]{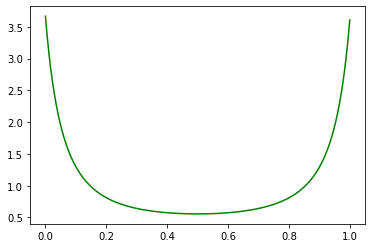}
		}
		\caption{The function $f$ and its derivative $f'$ for two values of $\rho$. In both cases  $\lambda = 2$. Note here we are choosing a $\lambda > 1$ to better illustrate the curvature of the function, but the shape remains similar for all $\lambda > 0$.}

		\label{fig:function_f_and_derivative}
	\end{figure*}

	Next, we identify the fixed points of the system. 
	
	\begin{lemma}
		\label{lem:simple_fixed_points_det}
		Consider the update function $f$ for the deterministic system.
		If $\rho \neq 1$ then the set of fixed points of  $f$ is   $\left\{0,\frac{1}{2},1\right\}$. If $\rho = 1$ then  $f(x) = x$ for all $x \in [0,1]$.
	\end{lemma}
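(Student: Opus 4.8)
The plan is to reduce the fixed-point condition $f(x) = x$ to a polynomial equation and factor it explicitly. Writing $f(x) = N(x)/D(x)$ with $N(x) = 2x^3(\rho-1) - x^2(\rho-1)(2-\lambda) + x(\rho+\lambda)$ and $D(x) = (1+\lambda)\bigl(2x^2(\rho-1) - 2x(\rho-1) + \rho\bigr)$, I would first record that the denominator never vanishes on $[0,1]$, so that $f$ is genuinely a map into $[0,1]$ and clearing denominators is legitimate. Indeed, $D(x)/(1+\lambda)$ is a quadratic in $x$ symmetric about $x = 1/2$; when $\rho > 1$ it opens upward with minimum value $D(1/2)/(1+\lambda) = (\rho+1)/2 > 0$, and when $\rho < 1$ it opens downward with its minimum over $[0,1]$ attained at the endpoints, where it equals $\rho > 0$. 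Since $\rho = a\alpha/(b\beta) > 0$, the denominator is strictly positive on $[0,1]$ in every case.

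With $D(x) > 0$, the equation $f(x) = x$ is equivalent to $N(x) - x\,D(x) = 0$. The core step is to expand and simplify this polynomial. Collecting terms, the coefficient of $x^3$ is $-2\lambda(\rho-1)$, the coefficient of $x^2$ is $3\lambda(\rho-1)$, and the coefficient of $x$ is $-\lambda(\rho-1)$, so that
\begin{align}
N(x) - x\,D(x) = -\lambda(\rho-1)\,x\,(2x^2 - 3x + 1) = -\lambda(\rho-1)\,x\,(2x-1)(x-1). \notag
\end{align}

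When $\rho \neq 1$, both $\lambda > 0$ (since $\lambda \in (0,1)$) and $\rho - 1 \neq 0$ are nonzero, so the fixed-point equation collapses to $x(2x-1)(x-1) = 0$, whose roots are exactly $0$, $1/2$, and $1$; these are genuine fixed points because $D$ is nonvanishing there. When $\rho = 1$, the factor $\rho - 1$ annihilates the cubic and quadratic terms of $N$ and the leading terms of $D$ directly, leaving $N(x) = x(1+\lambda)$ and $D(x) = 1+\lambda$, hence $f(x) = x$ for every $x \in [0,1]$.

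I expect the only real obstacle to be the bookkeeping in the expansion of $N(x) - x\,D(x)$; the payoff is that it factors cleanly with $x(2x-1)(x-1)$ pulled out, so the three fixed points emerge without solving a genuine cubic. The positivity of $D$ on $[0,1]$ is the one hypothesis-dependent fact (it uses $\rho > 0$, equivalently $a,\alpha,b,\beta > 0$) that must be checked to license clearing the denominator and to rule out spurious fixed points where the map would be undefined.
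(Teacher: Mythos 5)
Your proof is correct and follows essentially the same route as the paper's: clear the denominator, expand $N(x) - x\,D(x)$, and factor it as $-\lambda(\rho-1)\,x\,(2x-1)(x-1)$, from which both cases follow. Your additional verification that the denominator is strictly positive on $[0,1]$ is a small rigor bonus the paper leaves implicit (it is needed to make the cross-multiplication a genuine equivalence), but it does not change the substance of the argument.
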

	\begin{proof}
		To find the fixed points, we require that $f(x) = x$. Equivalently, 
		\begin{align*}
			& f(x) = x \iff \notag \\
			& 2x^3 (\rho - 1) -x^2 (\rho-1)(2- \lambda) + x(\rho + \lambda) =  (1+ \lambda) \left(2 x^3 (\rho - 1) - 2x^2 (\rho - 1) + \rho x \right) \iff \notag \\
			& \lambda (\rho - 1) x \bigl(2x -1 \bigr) \bigl( x-1\bigr) = 0 \,. 
		\end{align*}
		By definition, $\lambda > 0$.
		If $\rho = 1$, then $f(x) = x$ for any $x \in [0,1]$. If $\rho \neq 1$, then the roots are $0$, $1$, and $1/2$ as required.
	\end{proof}
	
	Finally, we consider the system convergence. This is visualized in  	Figure~\ref{fig:function_f_and_derivative}, which shows the function $f$ together with its  derivative for different values of $\rho$. 
	
	\begin{proposition}[Convergence of the deterministic system] \label{prop1}
		The deterministic system of Definition~\ref{def:deterministic} 
		\begin{itemize}
			\item  monotonically decreases to $0$ when $\rho > 1$.
			\item monotonically increases to  $1/2$ when $\rho < 1$.
			\item is constant when $\rho = 1$.
		\end{itemize}
	\end{proposition}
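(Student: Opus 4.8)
The plan is to reduce the whole statement to an analysis of the scalar map $f$ from Lemma~\ref{lem:x_t_update_function} on the interval $[0,1/2]$, exploiting the factorizations already implicit in Lemmas~\ref{lem:x_t_update_function} and~\ref{lem:simple_fixed_points_det}. Throughout I would take $\Phi_0 \in (0,1/2)$; the boundary value $\Phi_0 = 1/2$ is itself a fixed point when $\rho \neq 1$, so it produces a constant sequence, consistent with $1/2$ being the unstable fixed point in these regimes. Once invariance and monotonicity are established, a standard monotone-convergence argument combined with the classification of fixed points pins down the limit.

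First I would determine the sign of $f(x)-x$ on $(0,1/2)$. Writing $f(x) = N(x)/\bigl((1+\lambda)D(x)\bigr)$ with $D(x) = 2(\rho-1)x(x-1)+\rho$, the computation in Lemma~\ref{lem:simple_fixed_points_det} shows the numerator of $f(x)-x$ equals $-\lambda(\rho-1)\,x(2x-1)(x-1)$, the overall sign being fixed by comparing leading coefficients. For $x \in (0,1/2)$ we have $x(2x-1)(x-1) > 0$, and $D(x) > 0$ on $[0,1]$ since $D$ is $(b\beta)^{-1}$ times the denominator of $\Gamma$, a sum of nonnegative terms that equals $\rho>0$ at the endpoints. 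Hence the sign of $f(x)-x$ is that of $-(\rho-1)$, so $f(x) < x$ when $\rho>1$ and $f(x) > x$ when $\rho<1$; this already gives strict monotonicity of $\{\Phi_t\}$ in both regimes, and the case $\rho=1$ is immediate from Lemma~\ref{lem:simple_fixed_points_det}, where $f = \mathrm{id}$.

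Second I would show $(0,1/2)$ is forward-invariant, so the monotone sequence stays trapped. For $\rho>1$ this is easy: $f(x) = \bigl(x + \lambda\,\Gamma(x)\bigr)/(1+\lambda) \ge x/(1+\lambda) > 0$ gives the lower bound, while $f(x) < x < 1/2$ gives the upper one. The delicate case is $\rho<1$, where I must rule out overshooting $1/2$. Here I would use the factorization
\[
\tfrac12 - f(x) = \frac{(1-2x)\,Q(x)}{2(1+\lambda)D(x)}, \qquad Q(x) = 2(\rho-1)x(x-1) + \rho(1+\lambda),
\]
which follows by dividing $(1+\lambda)D(x) - 2N(x)$ by its known root $2x-1$. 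For $x \in (0,1/2)$ both $1-2x$ and $D(x)$ are positive, and $Q(x) > 0$ because when $\rho<1$ both summands are positive ($\rho-1<0$ and $x(x-1)<0$). Thus $f(x) < 1/2$, and together with $f(x) > x > 0$ this yields invariance. I expect this factorization and the positivity of $Q$ to be the main obstacle, as it is the only step not following directly from the earlier lemmas.

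Finally, with invariance and monotonicity established, $\{\Phi_t\}$ is monotone and bounded within $[0,1/2]$, hence convergent to some limit $L$; continuity of $f$ forces $f(L)=L$, so $L \in \{0,1/2,1\}$ by Lemma~\ref{lem:simple_fixed_points_det}. When $\rho>1$ the sequence decreases with $L \le \Phi_1 < 1/2$, forcing $L=0$; when $\rho<1$ it increases with $0 < \Phi_0 \le L \le 1/2$, forcing $L=1/2$; and when $\rho=1$ it is constant. This establishes the three claimed behaviors.
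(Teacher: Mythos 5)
Your proof is correct and shares the paper's overall architecture: monotonicity from the sign of $f(x)-x$ via the factorization $\lambda(\rho-1)x(2x-1)(x-1)$, then a bound preventing the iterates from overshooting $\tfrac{1}{2}$ when $\rho<1$, then identification of the limit among the fixed points of Lemma~\ref{lem:simple_fixed_points_det}. The genuine difference is in the overshoot step, and there your route does more than differ: it repairs a flaw in the paper's argument. The paper pulls out the factor $x\le\tfrac{1}{2}$ and reduces $f(x)\le\tfrac{1}{2}$ to the claim that the fraction in \eqref{eq:sufficient_x_strict_interior_v2} is at most $1$, asserting this is equivalent to $(\rho-1)(2x-1)(x-1)\le 0$. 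The correct equivalence is $(\rho-1)(2x-1)(x-1)\ge 0$, which fails for $\rho<1$ and $x\in(0,\tfrac{1}{2})$; in fact the fraction in \eqref{eq:sufficient_x_strict_interior_v2} is strictly greater than $1$ on that interval (e.g.\ $\rho=0.2$, $\lambda=2$, $x=0.4$ gives about $1.11$; it equals $1$ only at $x=\tfrac{1}{2}$), so the paper's sufficient condition is false even though the conclusion $f(x)\le\tfrac{1}{2}$ is true. Your factorization is the correct fix: writing $f=N/\bigl((1+\lambda)D\bigr)$ as you do, both $(1+\lambda)D(x)-2N(x)$ and $(1-2x)\,Q(x)$ with $Q(x)=2(\rho-1)x(x-1)+\rho(1+\lambda)$ expand to the same cubic
\[
-4(\rho-1)x^3+6(\rho-1)x^2-2(2\rho+\rho\lambda-1)x+\rho(1+\lambda),
\]
and $Q>0$ on $(0,\tfrac{1}{2})$ when $\rho<1$ since both of its terms are positive (indeed $Q\ge\rho(1+\lambda)-(\rho-1)/2>0$ for every $\rho>0$), so $f(x)<\tfrac{1}{2}$ follows from $1-2x>0$ and $D>0$. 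Two further points in your favor: you verify $D(x)>0$ on $[0,1]$ by relating it to the denominator of $\Gamma$, which every cross-multiplication in both proofs silently requires, and you make the endgame explicit---monotone bounded iterates converge, continuity forces the limit to be a fixed point, and the trapping interval selects $0$ (for $\rho>1$) or $\tfrac{1}{2}$ (for $\rho<1$)---including the $\Phi_0=\tfrac{1}{2}$ boundary case, which the paper's statement glosses over.
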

	\begin{proof}
		By Lemma~\ref{lem:simple_fixed_points_det}, if $\rho=1$ then the deterministic system is constant. 
		
		Suppose that $\rho \neq 1$. Then  for every $x \in (0,1/2)$, we  have 
		\begin{align}
			& f(x) < x \iff \notag  \\ 
			& \frac{2x^3 (\rho - 1) -x^2 (\rho-1)(2- \lambda) + x(\rho + \lambda)}{(1+ \lambda) \left(2 x^2 (\rho - 1) - 2x (\rho - 1) + \rho \right) } < x  \iff  \notag \\
			& 2x^2 (\rho - 1) - x (\rho -1 )(2 - \lambda) + (\rho + \lambda)  < (1+ \lambda)\left(2 x^2 (\rho - 1) - 2x (\rho - 1) + \rho \right) \iff \notag \\
			& x^2 \cdot 2 \lambda (\rho - 1)  - x \cdot 3 \lambda(\rho - 1) +  (\rho -1 ) > 0 \iff  \notag \\
			& (\rho - 1) \cdot (2x-1) (x-1)  > 0  \,.
		\end{align}
		Since $x \in (0,1/2)$, we get $(2x-1)(x-1) > 0$. Thus $f(x) < x$ for $\rho > 1$ and $f(x) > x$ for $\rho < 1$.
		
		Thus when $\rho > 1$, the system  monotonically decreases towards the fixed point $0$. When $\rho < 1$, the sequence of iterates monotonically increases. 
		
		We show that the fixed point reached must be  $1/2$ since the minority can never become the majority, i.e. $f(x) \leq  1/2$  for all $x \in [0,1/2]$. The inequality clearly holds for $x=0$ since $f(0) = 0$. When $x \in (0, 1/2]$, we have:
		\begin{align*} 
			& f(x) \leq  1/2 \iff  \frac{x (2x^2 (\rho - 1) -x (\rho-1)(2- \lambda) + (\rho + \lambda))}{(1+\lambda) \left(2 x^2 (\rho - 1) - 2x (\rho - 1) + \rho \right) }  \leq \frac{1}{2}   \,. 
		\end{align*}
		
		Since $x \in (0,\frac{1}{2})$, it suffices to show that 
		\begin{align} \label{eq:sufficient_x_strict_interior_v2}
			\frac{2x^2 (\rho - 1) -x (\rho-1)(2- \lambda) + (\rho + \lambda)}{(1+\lambda) \left(2 x^2 (\rho - 1) - 2x (\rho - 1) + \rho \right) }  \leq 1 \,. 
		\end{align}
		
		Cross-multiplying and simplifying in \eqref{eq:sufficient_x_strict_interior_v2}, 
		since $\lambda > 0$, it suffices to show that
		$$ x (\rho - 1) + 1 \leq 2x^2 (\rho - 1) - 2x (\rho - 1) + \rho.$$
		This inequality is true iff
		$ (\rho - 1) (2x -1 ) (x-1) \leq 0,$
		which is always true when $\rho < 1$ and $x \in (0,\frac{1}{2})$. This  completes the proof. \end{proof}

	Next we deduce the stability of the fixed points by analyzing the derivative of the function $f$.
	
	\begin{lemma} \label{lem:definition_derivative_f}
		The derivative of the update function $f$ is 
		
		\begin{equation*}
			f'(x) =  \frac{ 4 x^4 (\rho - 1)^2 -8 x^3 (\rho - 1)^2 +\rho (\lambda + \rho)   - 2 x^2 (\rho - 1) \Bigl(2 + \rho ( \lambda - 4) \Bigr) + 2 x \rho  ( \rho - 1)( \lambda - 2)}{(1 + \lambda) \Bigl(2 x^2 ( \rho - 1 ) -2 x ( \rho - 1) +  \rho \Bigr)^2}\,.
		\end{equation*}		
		In particular,  
		\begin{align} 
			f'(0) = f'(1) & = \frac{\lambda +\rho}{\lambda \rho + \rho} \; \; \mbox{ and } \; \; 
			f'({1}/{2}) = \frac{2 \lambda \rho + \rho + 1}{(1 + \lambda)(\rho+1)}\,. \label{eq:evaluation_f_prime_fixed_points}
		\end{align}
	\end{lemma}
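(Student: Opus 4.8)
The plan is to compute $f'$ directly by the quotient rule and then specialize to the three fixed points. It is convenient to write $f(x) = N(x)/\big((1+\lambda)\,g(x)\big)$, where $N(x) = 2x^3(\rho-1) - x^2(\rho-1)(2-\lambda) + x(\rho+\lambda)$ is the numerator from Lemma~\ref{lem:x_t_update_function} and $g(x) = 2x^2(\rho-1) - 2x(\rho-1) + \rho$ is the bracketed factor in the denominator. Since the constant $(1+\lambda)$ passes through the differentiation unchanged, the quotient rule gives
\[
f'(x) = \frac{N'(x)\,g(x) - N(x)\,g'(x)}{(1+\lambda)\,g(x)^2},
\]
whose denominator is already the claimed $(1+\lambda)\big(2x^2(\rho-1)-2x(\rho-1)+\rho\big)^2$. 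So the entire task reduces to showing that the polynomial $N'(x)g(x) - N(x)g'(x)$ equals the quartic numerator stated in the lemma.

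First I would record the two derivatives $N'(x) = 6x^2(\rho-1) - 2x(\rho-1)(2-\lambda) + (\rho+\lambda)$ and $g'(x) = 2(\rho-1)(2x-1)$. Then I would expand the product $N'(x)g(x)$ (quadratic times quadratic) and the product $N(x)g'(x)$ (cubic times linear), subtract, and collect by powers of $x$ from $x^4$ down to $x^0$. The hard part is exactly this expansion, though it is mechanical rather than conceptual: one checks that the four leading contributions cancel down to coefficients $4(\rho-1)^2$ at $x^4$ and $-8(\rho-1)^2$ at $x^3$, that the $x^2$ and $x^1$ coefficients collapse (after factoring out $\rho-1$) to $-2(\rho-1)\big(2+\rho(\lambda-4)\big)$ and $2\rho(\rho-1)(\lambda-2)$, and that the constant term is $\rho(\lambda+\rho)$. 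This bookkeeping is the only real obstacle.

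For the three evaluations I would not plug into the full quartic but instead use the factored form $f'(x) = \big(N'(x)g(x) - N(x)g'(x)\big)/\big((1+\lambda)g(x)^2\big)$ together with a simplification at each point. At $x=0$ we have $N(0)=0$ and $g(0)=\rho$, so $f'(0) = N'(0)/\big((1+\lambda)g(0)\big) = (\rho+\lambda)/\big((1+\lambda)\rho\big)$, matching $(\lambda+\rho)/(\lambda\rho+\rho)$. For $x=1$ I would avoid recomputation and instead exploit the reflection symmetry of the model: the numerators of $\Gamma(x)$ and $\Gamma(1-x)$ sum to their common denominator, so $\Gamma(x)+\Gamma(1-x)=1$, whence $f(x)+f(1-x)=1$; differentiating gives $f'(1-x)=f'(x)$, and in particular $f'(1)=f'(0)$. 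Finally at $x=1/2$ the linear factor $g'(1/2)=2(\rho-1)(2\cdot\tfrac12-1)=0$ annihilates the $N g'$ term, leaving $f'(1/2) = N'(1/2)/\big((1+\lambda)g(1/2)\big)$; substituting $g(1/2)=(\rho+1)/2$ and $N'(1/2)=(2\rho\lambda+\rho+1)/2$ yields $(2\lambda\rho+\rho+1)/\big((1+\lambda)(\rho+1)\big)$, as claimed.
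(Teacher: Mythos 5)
Your proof is correct, and it is genuinely more self-contained than the paper's: the paper's proof of this lemma consists of asserting that the differentiation is a standard calculation verifiable with Mathematica and that evaluation at the fixed points gives \eqref{eq:evaluation_f_prime_fixed_points}, with no intermediate structure shown. You instead set $f = N/\bigl((1+\lambda)g\bigr)$ with $N(x) = 2x^3(\rho-1) - x^2(\rho-1)(2-\lambda) + x(\rho+\lambda)$ and $g(x) = 2x^2(\rho-1) - 2x(\rho-1) + \rho$, apply the quotient rule, and reduce everything to the polynomial identity $N'g - Ng' = $ (claimed quartic); I checked the coefficient bookkeeping you describe ($4(\rho-1)^2$ at $x^4$, $-8(\rho-1)^2$ at $x^3$, $-2(\rho-1)\bigl(2+\rho(\lambda-4)\bigr)$ at $x^2$, $2\rho(\rho-1)(\lambda-2)$ at $x$, constant $\rho(\lambda+\rho)$) and it is exactly right. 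Your fixed-point evaluations are also cleaner than brute substitution: $N(0)=0$ gives $f'(0)$ in one line; the reflection identity $\Gamma(x)+\Gamma(1-x)=1$ (the two numerators sum to the common symmetric denominator) gives $f(x)+f(1-x)=1$, hence $f'(x)=f'(1-x)$ and $f'(1)=f'(0)$ for free; and $g'(1/2)=0$ kills the $Ng'$ term at $x=1/2$. What your route buys is a human-checkable proof plus a structural explanation of \emph{why} $f'(0)=f'(1)$ (it is forced by the red/blue symmetry of the model, not a computational coincidence); what the paper's route buys is brevity, outsourcing the algebra to a computer algebra system. The symmetry observation would be a worthwhile addition to the paper, since it also explains the symmetric shape of $f$ visible in Figure~\ref{fig:function_f_and_derivative}.
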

	\begin{proof}
		The calculation to obtain the derivative  is standard and can be checked, e.g., with Mathematica~\cite{mathematica}.
		Evaluating it at the fixed points, we obtain \eqref{eq:evaluation_f_prime_fixed_points}.
		This completes the proof of the lemma.
	\end{proof}

	For a one dimensional deterministic system with update function $f$, a fixed point $x$ is stable if $|f'(x)| < 1$ and unstable if $|f'(x)| > 1$. 
	
	\begin{proposition}[Stability of fixed points]
		The fixed points  $\{0,1\}$ are stable  when $\rho > 1$ and unstable when $\rho < 1$. The fixed point  $1/2$ is stable when $\rho < 1$ and unstable when $\rho > 1$.
	\end{proposition}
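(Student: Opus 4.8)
The plan is to invoke the standard one-dimensional stability criterion stated just above the proposition: a fixed point $x^\ast$ is stable when $|f'(x^\ast)| < 1$ and unstable when $|f'(x^\ast)| > 1$. Since Lemma~\ref{lem:definition_derivative_f} already supplies the closed-form values of $f'$ at each of the three fixed points $0$, $1/2$, and $1$, the entire argument reduces to comparing each of these quantities against $1$ and tracking the dependence on $\rho$.

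First I would handle the boundary fixed points $0$ and $1$ together, since $f'(0) = f'(1) = \frac{\lambda + \rho}{\lambda\rho + \rho}$. Because $\lambda, \rho > 0$, both numerator and denominator are strictly positive, so $f'(0) = f'(1) > 0$ and the absolute value may be dropped. Writing the denominator as $\rho(\lambda + 1)$, the inequality $f'(0) < 1$ is equivalent to $\lambda + \rho < \rho\lambda + \rho$, i.e. $\lambda < \rho\lambda$, which (dividing by $\lambda > 0$) holds precisely when $\rho > 1$. Hence $\{0, 1\}$ are stable for $\rho > 1$ and, by the reversed strict inequality, unstable for $\rho < 1$.

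Next I would treat the interior fixed point $1/2$, where $f'(1/2) = \frac{2\lambda\rho + \rho + 1}{(1+\lambda)(\rho+1)}$. Again all terms are positive, so the absolute value is immaterial. Expanding the denominator as $\rho + 1 + \lambda\rho + \lambda$, the inequality $f'(1/2) < 1$ becomes $2\lambda\rho < \lambda\rho + \lambda$, i.e. $\lambda\rho < \lambda$, which holds exactly when $\rho < 1$. Thus $1/2$ is stable for $\rho < 1$ and unstable for $\rho > 1$, completing the case analysis.

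I do not anticipate a genuine obstacle here: the computationally heavy step, namely differentiating the rational update function $f$ and evaluating it at the fixed points, was already carried out in Lemma~\ref{lem:definition_derivative_f}. What remains are three elementary inequality comparisons, and the only point requiring a word of care is confirming that each derivative value is positive, so that the criterion $|f'| < 1$ reduces to comparing the derivative itself with $1$; this is immediate from $\lambda, \rho > 0$. The borderline case $\rho = 1$, where the system is constant and every point of $[0,1]$ is fixed, is excluded from the proposition and is instead covered by Lemma~\ref{lem:simple_fixed_points_det}.
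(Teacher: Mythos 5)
Your proposal is correct and follows essentially the same route as the paper's own proof: both apply the stability criterion $|f'(x^\ast)| < 1$ versus $|f'(x^\ast)| > 1$ to the derivative values $f'(0) = f'(1) = \frac{\lambda+\rho}{\lambda\rho+\rho}$ and $f'(1/2) = \frac{2\lambda\rho+\rho+1}{(1+\lambda)(\rho+1)}$ from Lemma~\ref{lem:definition_derivative_f}, note positivity to drop the absolute values, and reduce each comparison to the sign of $\rho - 1$ by elementary algebra. No gaps; the argument matches the paper's.
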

	\begin{proof}
		By Lemma~\ref{lem:definition_derivative_f}, we have 
		\begin{align}
			& |f'(0)| = |f'(1)| = \left| \frac{ \lambda+\rho}{\lambda \rho + \rho}\right|  < 1 \notag   \\
			& \iff \frac{ \lambda +\rho}{\lambda \rho + \rho} < 1 \explain{Since $\lambda,\rho > 0$} \\
			& \iff \rho > 1\,. \notag 
		\end{align}
		Thus $0$ and $1$ are stable when $\rho > 1$. 
		Similarly,  $0$ and $1$ are unstable when $\rho < 1$.
		
		For $1/2$ to be stable we must have $|f'(1/2)| < 1$. By Lemma~\ref{lem:definition_derivative_f} we get:
		\begin{align}
			& |f'(1/2)| < 1 \iff \left|  \frac{2 \lambda \rho + \rho + 1}{(1 + \lambda )(\rho+1)} \right| < 1 \notag \\
			& \iff \frac{2 \lambda  \rho + \rho + 1}{(1 + \lambda )(\rho+1)} < 1 \explain{Since $\lambda ,\rho > 0.$} \\
			& \iff 2 \lambda  \rho + \rho + 1 < \rho + 1 + \lambda  \rho + \lambda \iff \rho < 1\,. \label{eq:derivation_stability_0.5_fixed_point} 
		\end{align}
		The inequalities in \eqref{eq:derivation_stability_0.5_fixed_point} are reversed when $\rho > 1$, which shows  the fixed point $1/2$ is unstable when $\rho > 1$.
	\end{proof}
	
	\section{Conclusion and Future Work}
	We propose a stochastic model and a deterministic approximation to capture the connection between model growth (recruitment) and the network structure. 
	
	There are multiple refinements that would be of further interest. For instance, we can consider a generalization of the dynamic where at step $1$ of the random system in Definition~\ref{def:stochastic} the graph ${G}_{t-1}^+$ is obtained from ${G}_{t-1}$ by refreshing independently the edges of ${G}_{t-1}$ with some probability $q_t \in [0,1]$. This would simulate projects that may run for more than one time step.  
	Furthermore, different time scales for observing the network could be considered.  For example, if $q_t$ is inverse polynomial in the current population size, then the number $m_t$ of arriving nodes could be constant (e.g., $m_t=1$). This would represent the arrival process  as each node joins the organization.
	
	Another generalization is considering multiple communities and allowing death or departure of nodes. Collaborations among larger teams could also be represented using hypergraphs via  a high-dimensional version of the stochastic block model. Collaborations across disciplines (e.g.  where the communities are computer science and physics) could also be captured in this framework.
	
	While the model is stylized, it brings to light that long term interactions across the network must be considered in achieving sustainable diversity in the network. Fig.~\ref{fig:deterministic__random_side_by_side_vanish} shows an example of how even a substantially large minority community can disappear if there is a low fraction of cross-community edges. Thus a recruitment-focused approach to increase diversity in an organization may fail, if sufficient efforts are not taken to integrate minority members into the population.
	
	Finally, while the edge probabilities  $\vec{p}$ might be difficult to change and are determined by individuals, an organization may be able to tweak the values of the weight matrix $\boldsymbol{\zeta}$ to nudge the population in the  direction of achieving parity when it is an important objective.

	\bibliographystyle{alpha}
	\bibliography{literature}
	
		\appendix

	\section{Appendix}
	
	In this section we prove Lemma~\ref{lem:expected_ratio_approx_ratio_expectations}. 
	
	\begin{lemma}[Chernoff bound] \label{lem:chernoff}
		Suppose $X_1, \ldots, X_n$ are independent random variables taking values in $\{0, 1\}$. Let $X = \sum_{i=1}^n X_i$ and $\mu = \Ex[X]$.
		Then for  each $\delta  \in (0,1)$,
		\begin{align}
			\Pr(|X - \mu| \geq \delta \mu) \leq 2 e^{-\delta^2 \mu/3}\,.
		\end{align}
	\end{lemma}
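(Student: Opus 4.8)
The plan is to prove the two-sided tail bound by treating the upper and lower deviations separately and combining them with a union bound, which is exactly what produces the factor of $2$. The workhorse throughout is the exponential moment (Chernoff) method. For the upper tail I would fix a parameter $s>0$ and apply Markov's inequality to the nonnegative random variable $e^{sX}$, writing $\Pr(X \geq (1+\delta)\mu) = \Pr\!\left(e^{sX} \geq e^{s(1+\delta)\mu}\right) \leq e^{-s(1+\delta)\mu}\,\Ex\!\left[e^{sX}\right]$.

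Next I would exploit independence. Since the $X_i$ are independent, the moment generating function factorizes as $\Ex[e^{sX}] = \prod_{i=1}^{n} \Ex[e^{sX_i}]$. Writing $p_i = \Pr(X_i=1)$, each factor equals $\Ex[e^{sX_i}] = 1 + p_i(e^s-1)$, and using the elementary inequality $1+y \leq e^y$ I would bound it by $e^{p_i(e^s-1)}$. Multiplying over $i$ and using $\sum_i p_i = \mu$ gives $\Ex[e^{sX}] \leq e^{(e^s-1)\mu}$, so that $\Pr(X \geq (1+\delta)\mu) \leq \exp\!\big((e^s-1)\mu - s(1+\delta)\mu\big)$. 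Optimizing the exponent over $s>0$ (the minimizer is $s=\ln(1+\delta)$) yields the clean but unwieldy bound $\left(e^{\delta}/(1+\delta)^{1+\delta}\right)^{\mu}$. The lower tail is handled symmetrically by taking $s<0$, which gives $\Pr(X \leq (1-\delta)\mu) \leq \left(e^{-\delta}/(1-\delta)^{1-\delta}\right)^{\mu}$.

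The final and most delicate step is to simplify these expressions into the stated Gaussian-type form. I would establish the scalar analytic inequality $\frac{e^{\delta}}{(1+\delta)^{1+\delta}} \leq e^{-\delta^2/3}$ for $\delta \in (0,1)$, which is equivalent to showing $g(\delta) := \delta - (1+\delta)\ln(1+\delta) + \delta^2/3 \leq 0$; this follows by checking $g(0)=0$ and that $g$ is nonincreasing on $(0,1)$, e.g. via the Taylor series of $\ln(1+\delta)$ or by differentiating twice. An analogous and slightly easier bound $\frac{e^{-\delta}}{(1-\delta)^{1-\delta}} \leq e^{-\delta^2/2} \leq e^{-\delta^2/3}$ handles the lower tail. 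With both tails bounded by $e^{-\delta^2\mu/3}$, the union bound $\Pr(|X-\mu| \geq \delta\mu) \leq \Pr(X \geq (1+\delta)\mu) + \Pr(X \leq (1-\delta)\mu)$ supplies the factor $2$ and finishes the argument. The main obstacle is purely this last scalar calculus inequality; everything preceding it is a routine application of Markov's inequality, independence, and $1+y \leq e^y$.
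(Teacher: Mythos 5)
Your proof is correct and is the standard exponential-moment (Markov plus MGF factorization) argument; the paper states Lemma~\ref{lem:chernoff} without any proof, citing it as a known fact, so your derivation is precisely the intended justification and there is nothing to compare it against. One small point of care: the monotonicity claim for $g(\delta)=\delta-(1+\delta)\ln(1+\delta)+\delta^2/3$ reduces to $\ln(1+\delta)\ge 2\delta/3$ on $(0,1)$, which is not immediate from a single differentiation (the difference $\ln(1+\delta)-2\delta/3$ increases on $(0,1/2)$ and decreases on $(1/2,1)$), but it does hold since it vanishes at $\delta=0$ and equals $\ln 2 - 2/3 > 0$ at $\delta=1$, so the second-derivative check you gesture at closes the argument.
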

	
	\noindent \textbf{Lemma~\ref{lem:expected_ratio_approx_ratio_expectations} (restated).}
	\emph{Let $\epsilon \in (0, 1/2)$. Let  $V = \{1, \ldots, n\}$ be a set of vertices such that   $n_R$ are red and $n_B = n - n_R$  are blue. For a random graph $G$ generated over  $V$ using the stochastic block model with probability matrix $\vec{p}=  \begin{bmatrix}
			{a} & {b} \\
			{b} & {a}
		\end{bmatrix}$ and weight matrix $ 
		\boldsymbol{\zeta}=  \begin{bmatrix}
			\alpha & \beta  \\
			\beta & \alpha 
		\end{bmatrix}$, let $R$  and $B$ be the weight of red and blue, respectively.}
	
	\emph{Suppose $n  \geq C_0 = C(\epsilon, a, \alpha, b, \beta)$ 
		and  $\min\{n_R, n_B\} \geq n^{\frac{1}{2}+\epsilon}$.}
	\emph{Then 
		\begin{align}
			\left(1 - \frac{1}{n^{\epsilon/4}}\right) \frac{\Ex[{R} ]}{\Ex[{R} + {B} ]}   < \frac{{R}}{{R} + {B}} <  \left(1 + \frac{1}{n^{\epsilon/4}}\right) \frac{\Ex[{R}]}{\Ex[{R} + {B}]} \,.
		\end{align}
		with  probability at least $1 - \frac{8}{e^{C_1 \cdot n^{\epsilon}}}$, where 
		$C_1= \min\left\{ {a}/{12}, {b}/{6}\right\}\,.$}
	\emph{Moreover, 
		\begin{align}
			\left(1 - \frac{1}{n^{\epsilon/5}}\right) \frac{\Ex[{R}]}{\Ex[{R} + {B}]} & < \Ex\left[\frac{{R}}{{R} + {B}} \right] <  \left(1 + \frac{1}{n^{\epsilon/5}}\right) \frac{\Ex[{R}]}{\Ex[{R} + {B}]} \,.
	\end{align}}
	\begin{proof}
		W.l.o.g.,  red is the minority in $G$, that is, we have $n_{R} \leq n_{B}$. Then the lemma statement requires  $n_{R} \geq n^{\frac{1}{2}+\epsilon}$. 
		
		Let $\vec{w}$ be the edge weights in $G$, where the weight of each edge $(i,j)$ is $w_{i,j} \in \{0, \zeta_{c_i, c_j}\}$.
		
		Define the  random variables:
		\begin{itemize}
			\item  $RR = \sum_{i,j \in [n]: c_i = c_j = R} w_{i,j}$, the  sum of  weights of the edges with both endpoints red in ${G}$.
			\item $RB = \sum_{i \in [n]: c_i  = R} \sum_{j \in [n]: c_j = B} w_{i,j}$, the sum of weights of  edges with one endpoint red and the other blue in ${G}$.
			\item $BB = \sum_{i,j \in [n]: c_i = c_j = B} w_{i,j}$, the sum of weights of  edges with both endpoints blue in ${G}$.
		\end{itemize}
		
		Let $Y_{i,j} = 1$ if edge $(i,j)$ has strictly positive weight  in $G$  and $Y_{i,j} = 0$ otherwise.  Then
		\begin{itemize}
			\item $RR / \alpha  = \sum_{i,j \in [n]:c_i=c_j=R}  Y_{i,j}$ and  $ {\Ex[RR]}/{\alpha} = \left(n_{R}\right)^2 {a}/{n} $.
			\item $RB / \beta = \sum_{i \in [n]:c_i=R} \sum_{j \in [n]:c_j=B} Y_{i,j} $ and  ${\Ex[RB]}/{\beta} = n_{R} \cdot n_{B} \cdot  {b}/{n}$.
			\item $BB / \alpha = \sum_{i, j \in [n]:c_i=c_j=B}  Y_{i,j} $ and  ${\Ex[BB]}/{\alpha} = \left(n_{B}\right)^2 {a}/{n}$.	
		\end{itemize}  
		
		By definition $R = RR+ RB$ and $B = BB + RB$, so 
		\begin{align} 
			\Ex[{R}] &  = \left(n_{R}\right)^2 \cdot {a \alpha}/{n} + n_R \cdot n_B \cdot {b \beta}/{n} \notag \\  
			\Ex[{B}] & =\left( n_B \right)^2  \cdot {a \alpha}/{n} + n_R \cdot n_B \cdot {b \beta}/{n} \,. \label{eq:def_R_B_expected}
		\end{align}
		
		The Chernoff bound (Lemma~\ref{lem:chernoff}) applied to random variables ${RR}/{\alpha}$, ${RB}/{\beta}$, and ${BB}/{\alpha}$ gives  
		\begin{align}
			\begin{cases}
				& \Pr\left( \left|\frac{RR}{\alpha} - \frac{\Ex[RR]}{\alpha} \right| \geq  \frac{\delta\Ex[RR]}{\alpha}\right)   \leq   
				2 \exp\left( -\frac{\delta^2}{3}   \cdot \left(n_{R}\right)^2 \cdot  \frac{a }{n} \right)   \\
				& \\ 
				& \Pr\left( \left|\frac{RB}{\beta}- \frac{\Ex[RB]}{\beta} \right| \geq \frac{\delta {\Ex[RB]}}{\beta} \right)   \leq 
				2 \exp\left( -\frac{\delta^2}{3}   \cdot n_{R} \cdot n_{B} \cdot \frac{b }{n} \right)   \\
				& \\ 
				& \Pr\left( \left| \frac{BB}{\alpha} - \frac{\Ex[BB]}{\alpha} \right| \geq  \frac{\delta \Ex[BB]}{\alpha}\right)    \leq 
				2 \exp\left( -\frac{\delta^2}{3}   \cdot \left(n_{B}\right)^2 \cdot  \frac{a }{n} \right) \label{eq:RRRBBB_t_chernoff}  \,.
			\end{cases} 
		\end{align}
		
		Since $n_{R} \geq n^{\frac{1}{2}+\epsilon}$, $n_{B} \geq \frac{n}{2}$,  $C_1 = \min\{a/12, b/6\}$, and $\epsilon \in (0,1/2)$, letting $\delta = \frac{1}{n^{\epsilon/2}}$ in \eqref{eq:RRRBBB_t_chernoff} yields  
		\begin{align}
			\Pr\Bigl( |RR - \Ex[RR]| \geq \frac{\Ex[RR]}{n^{\epsilon/2}} \Bigr) & \leq 2 \exp\left( \frac{-a \left(n_{R}\right)^2}{3n^{1 + \epsilon}}     \right)     \leq 2 \exp\left(\frac{-a}{3} \cdot n^{\epsilon } \right)    \leq 2 \exp\left(- C_1 \cdot  n^{\epsilon}  \right) \,. \notag  \\
			\Pr\Bigl( |RB - \Ex[RB]| \geq \frac{\Ex[RB]}{n^{\epsilon/2}} \Bigr)   & \leq  2 \exp\left( \frac{-b n_{R} \cdot n_{B} }{3n^{1+\epsilon}}   \right)   \leq 2 \exp\left( \frac{-b}{6} \cdot n^{\frac{1}{2}} \right)                     \leq 2 \exp\left(- C_1 \cdot  n^{\epsilon}  \right) \notag  \\  
			\Pr\Bigl( |BB - \Ex[BB]| \geq \frac{\Ex[BB]}{n^{\epsilon/2}} \Bigr)  & \leq 2 \exp\left( \frac{-a \left(n_{B}\right)^2}{3n^{1 + \epsilon}}  \right)    \leq 2 \exp\left(\frac{-a}{12} \cdot n^{1-\epsilon} \right)  \leq 2 \exp\left(- C_1 \cdot  n^{\epsilon}  \right) \,. \label{eq:three_inequalities_RR_RB_BB}
		\end{align}
		
		Define the events 
		\begin{align}
			G_{RR} & = \left\{  |RR - \Ex[RR]| \geq \frac{\Ex[RR]}{n^{\epsilon/2}} \right\}\,. \notag \\
			G_{RB} & = \left\{ |RB - \Ex[RB]| \geq \frac{\Ex[RB]}{n^{\epsilon/2}}  \right\} \,. \notag \\
			G_{BB} & = \left\{ |BB - \Ex[BB]| \geq \frac{\Ex[BB]}{n^{\epsilon/2}} \right\} \,.
		\end{align}

		By \eqref{eq:three_inequalities_RR_RB_BB} and  the union bound, 
		\begin{align}
			\begin{cases}
				\Pr\Bigl( G_{RR} \; \; \mbox{or} \;  \; G_{RB} \Bigr)   \leq 4 \exp\left( - C_1 \cdot {n^{\epsilon}}\right)\,.  \\
				\Pr\Bigl( G_{RB} \; \; \mbox{or} \;  \; G_{BB} \Bigr)   \leq 4 \exp\left( - C_1 \cdot {n^{\epsilon}}\right) \,. \label{eq:bounds_combined_RR_RB_and_BB_RB}
			\end{cases}
		\end{align}
		Inequalities \eqref{eq:bounds_combined_RR_RB_and_BB_RB} and \eqref{eq:def_R_B_expected}  imply  
		\begin{align}
			& \Pr\left( |R - \Ex[R]| \geq   \frac{\Ex[R]}{n^{\epsilon/2}} \right)   \leq  4 \exp\left( - C_1 \cdot {n^{\epsilon}}\right) \,. \notag \\
			&   \Pr\left( |B - \Ex[B]| \geq   \frac{\Ex[B]}{n^{\epsilon/2}} \right)   \leq  4 \exp\left( - C_1 \cdot {n_{t}^{\epsilon}}\right)   \,. \label{eq:inequalities_to_bound_A}
		\end{align}
		Then on the event 
		\begin{align}
			\mathcal{A} = \left\{ |R - \Ex[R]| <   \frac{\Ex[R]}{n^{\epsilon/2}}, |B - \Ex[B]| <   \frac{\Ex[B]}{n^{\epsilon/2}} \right\},
		\end{align}
		we have
		\begin{align}
			& 	\left(1 - \frac{1}{n^{\epsilon/2}}\right) \Ex[R] < R < \left(1 + \frac{1}{n^{\epsilon/2}}\right) \Ex[R] \,.  \notag \\
			& \left(1 - \frac{1}{n^{\epsilon/2}}\right) \Ex[R+B] < R  + B   < \left(1 + \frac{1}{n_t^{\epsilon/2}}\right) \Ex[R + B] \,.
		\end{align}
		Suppose $n \geq C_2' = 81^{\frac{1}{\epsilon}}$. 
		Since $\Ex[R + B] > 0$, on event  $\mathcal{A}$ the next inequalities hold:
		\begin{align}
			\frac{\left(1 - \frac{1}{n^{\epsilon/4}}\right)   \Ex[R]}{\Ex[R + B]} & \leq 
			\left( \frac{ 1 - \frac{1}{n^{\epsilon/2}}}{1 + \frac{1}{n^{\epsilon/2}}} \right) \frac{\Ex[R]}{\Ex[R + B]}  \notag \\
			& < \frac{R}{R + B} \notag \\
			&  <
			\left( \frac{ 1 + \frac{1}{n^{\epsilon/2}}}{1 - \frac{1}{n^{\epsilon/2}}} \right) 
			\frac{\Ex[R]}{\Ex[R + B]} 
			\notag \\
			& 			\leq \frac{\left(1 
				+ \frac{1}{n^{\epsilon/4}}\right)  \Ex[R]}{\Ex[R + B]} \,.
		\end{align}
		Thus on event $\mathcal{A}$, the following event occurs:
		\begin{align}
			\mathcal{B} & = \Bigl\{ \Bigr. \left(1 - \frac{1}{n^{\epsilon/4}}\right) \frac{\Ex[R]}{\Ex[R + B]}     < \frac{R}{R + B}   <  \left(1 + \frac{1}{n^{\epsilon/4}}\right) \frac{\Ex[R]}{\Ex[R + B]}  \Bigr\} \,.
		\end{align}
		We have $\mathcal{A} \subseteq \mathcal{B}$, so $1 - \Pr(\mathcal{B}) \leq 1 - \Pr(\mathcal{A})$.  By the union bound, we have 
		\begin{align}
			\Pr(\mathcal{B}^c) = 1 - \Pr(\mathcal{B}) \leq 1 - \Pr(\mathcal{A}) \leq 8 \exp\left(-C_1 \cdot n^{\epsilon}\right), \label{eq:ub_B_complement}
		\end{align}
		where we used \eqref{eq:inequalities_to_bound_A} to upper bound $1-\Pr(\mathcal{A})$. This proves the first part of the lemma. 
		
		Since $\frac{R}{R+B} < 1$, we can upper bound $\Ex\left[\frac{R}{R+B}\right]$ by
		\begin{align}
			\Ex\left[ \frac{R}{R + B}\right] & \leq \Pr(\mathcal{B}) \cdot \left( \left(1 + \frac{1}{n^{\epsilon/4}}\right) \frac{\Ex[R]}{\Ex[R + B]}   \right) + \Pr(\mathcal{B}^c) \cdot 1  \notag \\
			& \leq 1 \cdot  \left( \left(1 + \frac{1}{n^{\epsilon/4}}\right) \frac{\Ex[R]}{\Ex[R + B]}   \right) + 8 \exp\left(-C_1 \cdot n^{\epsilon}\right)\,.  \explain{Using \eqref{eq:ub_B_complement} and $\Pr(\mathcal{B}) \leq 1$}  
		\end{align} 
		We have   
		\begin{align}
			\left(1 + \frac{1}{n^{\epsilon/4}}\right) \frac{\Ex[R]}{\Ex[R + B]} + 8 \exp\left(-C_1 \cdot n^{\epsilon}\right)  \leq \left(1 + \frac{1}{n^{\epsilon/5}}\right) \frac{\Ex[R]}{\Ex[R + B]} \,. \label{eq:required_ub_expected_ratio_red_intermediate_1}  \end{align}
		Inequality \eqref{eq:required_ub_expected_ratio_red_intermediate_1} holds if and only if 
		\begin{align} 
			\left( \frac{1}{n^{\epsilon/5}} - \frac{1}{n^{\epsilon/4}} \right) \frac{\Ex[R]}{\Ex[R+B]} \geq 8 \exp\left(-C_1 \cdot n^{\epsilon}\right)\,. \label{eq:expected_ratio_red_intermediate_ub1}
		\end{align}
		For $n \geq C_2'' = \exp\left(\frac{20 \ln{2}}{\epsilon}\right)$, we have 
		\[ 
		{1}/{n^{\epsilon/5}} - {1}/{n^{\epsilon/4}} \geq {1}/{n^{\epsilon/4}},
		\] so the inequality 
		\begin{align}
			\left( \frac{1}{n^{\epsilon/5}} - \frac{1}{n^{\epsilon/4}} \right) \frac{\Ex[R]}{\Ex[R+B]} \geq  \frac{1}{n^{\epsilon/4}} \cdot \frac{\Ex[R]}{\Ex[R + B]}  \geq 8 \exp\left(-C_1 \cdot n^{\epsilon}\right) 
		\end{align} 
		holds when 
		\begin{align}
			\frac{\Ex[R]}{\Ex[R + B]} & \geq \frac{8 n^{\epsilon/4}}{e^{C_1 \cdot n^{\epsilon}}}\,. \label{eq:expected_ratio_red_intermediate_ub2}
		\end{align}
		By Lemma~\ref{lem:ub_lb:expected_ratio_approx_ratio_expectations}, we have 
		\[ \frac{\Ex[R]}{\Ex[R + B]} \geq \frac{2 b \beta}{\left(5a \alpha  + 4 b \beta\right)\cdot n^{\frac{1}{2}-\epsilon}}\,.
		\] 
		Thus \eqref{eq:expected_ratio_red_intermediate_ub2} would hold if 
		\begin{align}
			& \frac{2 b \beta}{\left(5a \alpha  + 4 b \beta\right)\cdot n^{\frac{1}{2}-\epsilon}} \geq \frac{8 n^{\epsilon/4}}{e^{C_1 \cdot n^{\epsilon}}} \iff \notag  \\
			& \left(\frac{b \beta}{5a\alpha + 4 b \beta} \right) e^{C_1 \cdot n^{\epsilon}} \geq 4 n^{\frac{1}{2} - \frac{3\epsilon}{4}}  \iff \notag \\
			&  n^{\epsilon} \geq \ln(n)  \left( \frac{2 - 3 \epsilon}{4 C_1}\right) + \frac{\ln(2)}{4 C_1}  - \frac{1}{C_1} \ln\left(\frac{b \beta}{5a\alpha + 4 b \beta} \right)  \,. \label{eq:exponential_more_poly_ub3} 
		\end{align} 
		
		There exists a constant $C_2''' = C(\epsilon,a,\alpha,b,\beta)$ so that  \eqref{eq:exponential_more_poly_ub3} holds for all $n \geq C_2'''$. 
		
		Thus  \eqref{eq:required_ub_expected_ratio_red_intermediate_1} holds for $n \geq \max\{C_2', C_2'', C_2'''\}$,  which completes the  upper bound on $\Ex\left[\frac{R}{R+B} \right]$.
		
		To lower bound $\Ex\left[\frac{R}{R+B} \right]$, we write:
		\begin{align}
			\Ex\left[\frac{R}{R+B} \right] & \geq \Pr(\mathcal{B}) \cdot \left( 1 - \frac{1}{n^{\epsilon/4}} \right) \frac{\Ex[R]}{\Ex[R+B]}   \notag \\
			& \geq \left(1 -  \frac{8}{e^{C_1 \cdot n^{\epsilon}}} \right) \left( 1 - \frac{1}{n^{\epsilon/4}} \right) \frac{\Ex[R]}{\Ex[R+B]} \,. \label{eq:lb_expected_ratio_red_intermediate_1} 
		\end{align}
		There exists $C_2'''' = C(\epsilon, a, \alpha, b, \beta)$ such that for $n \geq C_2''''$ we have \[
		\left(1 -  \frac{8}{e^{C_1 \cdot n^{\epsilon}}} \right) \left( 1 - \frac{1}{n^{\epsilon/4}} \right) \geq 1 - 1/n^{\epsilon/5},
		\] 
		thus  
		\[ \Ex\left[\frac{R}{R+B} \right]  \geq \left( 1 -  \frac{1}{n^{\epsilon/5}} \right)  \frac{\Ex[R]}{\Ex[R+B]} \,.
		\]
		
		Take $C_0 =  \max\left(81^{\frac{1}{\epsilon}}, C_2', C_2'', C_2''', C_2''''\right)$. Then for $n \geq C_0$, we have 
		\begin{align} 
			& \left(1 - \frac{1}{n^{\epsilon/5}}\right) \frac{\Ex[{R}]}{\Ex[{R} + {B}]} < \Ex\left[\frac{{R}}{{R} + {B}} \right]  <  \left(1 + \frac{1}{n^{\epsilon/5}}\right) \frac{\Ex[{R}]}{\Ex[{R} + {B}]},
		\end{align}
		as required by the lemma statement.
	\end{proof}
	
	\begin{lemma} \label{lem:ub_lb:expected_ratio_approx_ratio_expectations}
		In the setting of Lemma~\ref{lem:expected_ratio_approx_ratio_expectations}, 
		\begin{align}
			\frac{2 b \beta}{\left(5a \alpha  + 4 b \beta\right)\cdot n^{\frac{1}{2}-\epsilon}}  \leq 	\frac{\Ex[R]}{\Ex[R+B]}  \notag \,.
		\end{align}
	\end{lemma}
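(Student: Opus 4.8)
The plan is to observe that the target ratio is nothing other than the function $\Gamma$ from \eqref{eq:gamma} evaluated at the minority fraction, and then to bound $\Gamma$ from below on the relevant range. Writing $\phi := n_R/n$, the closed forms $\Ex[R] = (n_R)^2 a\alpha/n + n_R n_B\, b\beta/n$ and $\Ex[B] = (n_B)^2 a\alpha/n + n_R n_B\, b\beta/n$ from \eqref{eq:def_R_B_expected} give, after cancelling the common factor $1/n$ and dividing numerator and denominator by $n^2$, the identity $\Ex[R]/\Ex[R+B] = \Gamma(\phi)$, exactly as in \eqref{eq:ratio_of_expectations_gamma_xi_tminus1}. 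Since red is the minority we have $\phi \le 1/2$, and the hypothesis $\min\{n_R,n_B\} \ge n^{1/2+\epsilon}$ translates into $\phi \ge 1/n^{1/2-\epsilon}$. Thus it suffices to prove $\Gamma(\phi) \ge \tfrac{2 b\beta}{5a\alpha+4b\beta}\,\phi$ for all $\phi \in (0, 1/2]$ and then substitute the lower bound on $\phi$.

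To bound $\Gamma(\phi)$ I would handle its numerator and denominator separately. For the numerator $a\alpha\,\phi^2 + b\beta\,\phi(1-\phi)$, I drop the nonnegative term $a\alpha\,\phi^2$ and use $1-\phi \ge 1/2$ (valid since $\phi \le 1/2$) to obtain the lower bound $\tfrac12 b\beta\,\phi$. For the denominator $a\alpha\,\phi^2 + 2b\beta\,\phi(1-\phi) + a\alpha(1-\phi)^2$, I bound each term by its worst case over $\phi \in [0,1/2]$: the estimate $\phi^2 \le 1/4$ gives $a\alpha\,\phi^2 \le a\alpha/4$; the crude estimate $\phi(1-\phi) \le \phi \le 1/2$ gives $2b\beta\,\phi(1-\phi) \le b\beta$; and $(1-\phi)^2 \le 1$ gives $a\alpha(1-\phi)^2 \le a\alpha$. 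Summing yields a denominator of at most $\tfrac54 a\alpha + b\beta$. Combining the two bounds and clearing the factor $4$ produces exactly $\Gamma(\phi) \ge \tfrac{2b\beta\,\phi}{5a\alpha+4b\beta}$, and then $\phi \ge 1/n^{1/2-\epsilon}$ finishes the proof.

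There is no genuinely hard step here; the only point requiring care is that, because blue is the majority, the term $(1-\phi)^2$ in the denominator stays of order one and cannot be made small, so one must not attempt a symmetric argument. The whole estimate rests on keeping a single factor of $\phi$ in the numerator and controlling every cross term by the constant bound $\phi \le 1/2$, after which the claimed $n^{1/2-\epsilon}$ scaling falls out directly. The deliberately loose choice $2b\beta\,\phi(1-\phi)\le b\beta$ (rather than the sharper $\le b\beta/2$ available from $\phi(1-\phi)\le 1/4$) is precisely what makes the constant in the statement $5a\alpha+4b\beta$; any tighter estimate would only strengthen the inequality, so the stated bound is comfortably sufficient for its use in \eqref{eq:expected_ratio_red_intermediate_ub2}.
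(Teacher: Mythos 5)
Your proof is correct and takes essentially the same approach as the paper: both arguments lower-bound the numerator by keeping only the $b\beta$ term with $1-\phi \geq 1/2$ (equivalently $n_B \geq n/2$) and upper-bound the denominator term by term using $\phi \leq 1/2$ and $1-\phi \leq 1$, arriving at the identical constant $\frac{2b\beta}{5a\alpha+4b\beta}$. The only organizational difference is that you package the estimate as the pointwise inequality $\Gamma(\phi) \geq \frac{2b\beta}{5a\alpha+4b\beta}\,\phi$ on $(0,1/2]$ before substituting $\phi \geq n^{-(1/2-\epsilon)}$, whereas the paper substitutes the extreme values of $n_R$ and $n_B$ directly into the ratio.
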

	\begin{proof}
		By \eqref{eq:def_R_B_expected}, 
		\begin{align}
			\frac{\Ex[R]}{\Ex[R+B]} & = \frac{{a \alpha}   \left(n_{R}\right)^2   + {b \beta}  \left(  n_{R} \cdot n_{B} \right)}{ {a \alpha}   \left(n_{R} \right)^2   + 2 \cdot  {b \beta}  \left(  n_{R} \cdot n_{B} \right) +  {a \alpha}  \left( n_{B} \right)^2 } \notag \\ 
			& \geq \frac{a \alpha \cdot n^{1 + 2 \epsilon} + b \beta \cdot n^{\frac{1}{2} + \epsilon} \cdot \frac{n}{2}}{a \alpha \cdot  \left(\frac{n}{2}\right)^2 + 2 \cdot b \beta \cdot \frac{n}{2} \cdot n + a \alpha \cdot n^2} \notag \\
			& = \frac{a \alpha \cdot {n^{\epsilon - \frac{1}{2}}} + b \beta/2}{\left(5a \alpha/4  + b \beta\right)\cdot n^{\frac{1}{2}-\epsilon}} \notag \\
			& \geq \frac{2 b \beta}{\left(5a \alpha  + 4 b \beta\right)\cdot n^{\frac{1}{2}-\epsilon}} \,.
		\end{align}
		This completes the proof of the lemma.
	\end{proof}

\end{document}